\newcommand{\Hodd}{\hat{H}_{n,\mathrm{odd}}}
\newcommand{\Heven}{\hat{H}_{n,\mathrm{even}}}
\newcommand{\psieven}{\psi_{\mathrm{even}}}
\newcommand{\nnum}{\mathbb{N}}
\newcommand{\cnum}{\mathbb{C}}
\newcommand{\e}{\mathrm{e}}
\newcommand{\hilbert}{\mathcal{H}}
\newtheorem{theorem}{Theorem}[section]
\newtheorem{proposition}[theorem]{Proposition}\theoremstyle{remark}
\newcommand{\bra}[1] {\left\langle #1 \right|}
\newcommand{\ket}[1] {\left| #1 \right\rangle}
\newcommand{\braket}[2] {\left\langle #1 | #2 \right\rangle}
\newcommand{\domain}{\mathcal{D}}
\begin{document}

\title{Finite-dimensional approximations of generalized squeezing}

\author{Sahel Ashhab}
\affiliation{\small Advanced ICT Research Institute, National Institute of Information and Communications Technology (NICT), 4-2-1, Nukui-Kitamachi, Koganei, Tokyo 184-8795, Japan}
\affiliation{\small Research Institute for Science and Technology, Tokyo University of Science, 1-3 Kagurazaka, Shinjuku-ku, Tokyo 162-8601, Japan}

\author{Felix Fischer}
\affiliation{\small Department Physik, Friedrich-Alexander-Universität Erlangen-Nürnberg, Staudtstraße 7, 91058 Erlangen, Germany}

\author{Davide Lonigro}
\affiliation{\small Department Physik, Friedrich-Alexander-Universität Erlangen-Nürnberg, Staudtstraße 7, 91058 Erlangen, Germany}

\author{Daniel Braak}
\affiliation{\small TP III and Center for Electronic Correlations and Magnetism, University of Augsburg, 86135 Augsburg, Germany}

\author{Daniel Burgarth}
\affiliation{\small Department Physik, Friedrich-Alexander-Universität Erlangen-Nürnberg, Staudtstraße 7, 91058 Erlangen, Germany}

\begin{abstract}
We show unexpected behaviour in simulations of generalized squeezing performed with finite-dimensional truncations of the Fock space: even for extremely large dimension of the state space, the results depend on whether the truncation dimension is even or odd. This situation raises the question whether the simulation results are physically meaningful. We demonstrate that, in fact, the two truncation schemes correspond to two well-defined, distinct unitary evolutions whose generators are defined on different subsets of the infinite-dimensional Fock space. This is a consequence of the fact that the generalized squeezing Hamiltonian is not self-adjoint on states with finite excitations, but possesses multiple self-adjoint extensions. Furthermore, we present results on the spectrum of the squeezing Hamiltonians corresponding to even and odd truncation size that elucidate the properties of the two different self-adjoint extensions corresponding to the even and odd truncation scheme. To make the squeezing operator applicable to a physical system, we must regularize it by other terms that depend on the specifics of the experimental implementation. We show that the addition of a Kerr interaction term in the Hamiltonian leads to uniquely converging simulations, with no dependence on the parity of the truncation size, and demonstrate that the Kerr term indeed renders the Hamiltonian self-adjoint and thus physically interpretable.
\end{abstract}

\maketitle

\newpage

\section{Introduction}
\label{Sec:Introduction}

There has been a rapid advance in the development of quantum optical technology in recent years, including the realization of nonlinear effects~\cite{Walls,Scully,Drummond}. Physical systems that realize multiphoton interactions have been proposed theoretically~\cite{Felicetti,Ayyash2024} and demonstrated experimentally~\cite{Chang,Eriksson,Bazavan,Saner}. One of the applications of these systems is generalized squeezing, in which photons are generated in multiples of $n$ photons at a time. This phenomenon is a generalization of the well-known phenomenon of squeezing, in which photons are generated in pairs~\cite{Yurke,Movshovich,Yamamoto,Vahlbruch,Kudra}. The theoretical description of generalized squeezing has proven to be a difficult problem, resulting in debates that have continued for decades~\cite{Fisher,Hillery1984,Hong,Braunstein1987,Braunstein1990,Hillery1990,DellAnno,Zelaya,Braak}.

One of the authors recently performed simulations of generalized squeezed states in finite-dimensional truncations of Fock spaces, and found oscillatory dynamics of the mean photon number for higher-order squeezing, in contrast to the unbounded photon generation in the case of two-photon squeezing~\cite{Ashhab2025}. A few other remarkable and counter-intuitive results came out of that study. For example, in the cases of three- and four-photon squeezing, if we extrapolate the simulation results to the limit of an infinite Hilbert space size, the oscillation amplitude diverges, while the oscillation period converges. In the five-photon case, as well as higher orders beyond five, both oscillation amplitudes and periods were instead well-behaved in the simulations. Specifically, these quantities exhibited no noticeable change when the simulation size was changed by orders of magnitude. Such a situation would typically indicate that the simulations reliably capture the physical behaviour of the system under study.
Moreover, the oscillations indicated a \emph{discrete} spectrum of the higher-order squeezing operators, in contrast to the continuous spectrum of ordinary squeezing.

Subsequent to the publication of Ref.~\cite{Ashhab2025}, the authors of Ref.~\cite{Gordillo} and we independently noticed a serious issue with the simulation of higher-order squeezing. In all the simulations of Ref.~\cite{Ashhab2025}, the truncation dimension was an even number. If we instead perform simulations with an odd size, the results converge as well for large dimensions, but to a different value. One normally expects the parity of the dimension (i.e.~odd vs even) to have a negligible effect on the result for sufficiently large simulation sizes. Therefore, the strong parity dependence of the simulation results raises questions about the reliability and physical correctness of these numerical simulations and, more generally, inspires a careful mathematical analysis of the generalized squeezing problem.

In this paper, we present our recent results on the simulation of generalized squeezing with different truncation sizes, and we discuss the implications of our new findings in relation to the physical interpretation of the results in light of the mathematical theory of Hermitian operators in infinite-dimensional Hilbert spaces. We also present simulation results pertaining to the spectrum of the squeezing operator that help explain the different dynamics of the even/odd case. Furthermore, we present a mathematical derivation that identifies the even- and odd-truncation-size results as the difference between two different self-adjoint extensions of the squeezing Hamiltonian, which itself is not self-adjoint on states with finitely many excitations. Moreover, we show that the addition of Kerr terms restores the essential self-adjointness of the Hamiltonian, and thus a unique and parity-independent dynamics.

The remainder of this paper is organized as follows. In Section~\ref{Sec:Theory} we describe the model of the generalized squeezing operator. In Section~\ref{Sec:Simulations} we present our simulation results and discuss a few related questions, such as the physical meaning of the finite and infinite Hilbert space dynamics, relation to recent experiments, and the effect of adding a regulator term to make the extrapolation to infinite dimensions unique. In Section~\ref{Sec:Math} we provide a mathematical analysis that explains both the parity dependence of the simulation results in the absence of additional regulator terms, and the regularizing effect of such additional terms on the dynamics. We conclude with some final remarks in Section~\ref{Sec:Conclusion}.

\section{Theoretical background}
\label{Sec:Theory}

As explained for example in Ref.~\cite{Ashhab2025}, the $n$th-order generalization of squeezing, i.e.~the general case in which photons are created in groups of $n$ photons, is described by the squeezing operator of order $n$
\begin{equation}
\hat{U}_n \left( r \right) = \exp \left\{ -i r \hat{H}_n \right\},
\label{Eq:GeneralizedSqueezingOp}
\end{equation}
where
\begin{equation}
\hat{H}_n = i \left[ \left(\hat{a}^\dagger\right)^n - \hat{a}^n \right],
\label{Eq:Hamiltonian_n}
\end{equation}
$r$ is the squeezing parameter, and $\hat{a}$ and $\hat{a}^{\dagger}$ are, respectively, the photon annihilation and creation operators of a single mode of the electromagnetic field. The operator (\ref{Eq:Hamiltonian_n}) is Hermitian but unbounded and acts on an infinite-dimensional Hilbert space, which is spanned by eigenstates $\ket{m}$, $m=0,1,2\ldots\infty$, of the photon number operator $\hat{a}^\dagger \hat{a}$ (Fock states). We refer to Section~\ref{Sec:Math} for additional mathematical details.

In our simulations in this work, we focus on the case of squeezed vacuum states $\ket{\psi_n(r)}$, in which the squeezing operator $\hat{U}_n(r)$ is applied to the vacuum state $\ket{0}$, i.e.~$\ket{\psi_n(r)}=\hat{U}_n(r)\ket{0}$. The operator (\ref{Eq:GeneralizedSqueezingOp}) will be unitary for real $r$, according to Stone's theorem, if its generator $\hat{H}_n$ is self-adjoint~\cite{reed1}. While all Hermitian operators in finite-dimensional state spaces are self-adjoint, this is not necessarily the case for unbounded operators in infinite dimensions on a given domain~\cite{reed1}. It may happen that a Hermitian operator $\hat{A}$ is not self-adjoint, and therefore does not generate a unique unitary transformation $\exp(-ir\hat{A})$. However, all numerical simulations inherently take place in finite-dimensional spaces, where this difference does not arise, and it remains necessary to check whether an extrapolation to infinite dimension is possible. A necessary condition for this is the convergence of the computed results as the state space dimension increases. In most familiar cases, the operator $\hat{A}$ is well-behaved with increasing dimension, and the corresponding evolution operator is unique, leading to unique and converging extrapolations of finite-dimensional simulations. We will see in the following that the generalized squeezing operator $\hat{H}_n$ for $n\ge 3$ does not have this property. 

The simulations of the dynamics proceed as in Ref.~\cite{Ashhab2025}, by constructing a truncated Hamiltonian matrix of size $N \times N$. We note here that the definition of $N$ in this work is different from that of Ref.~\cite{Ashhab2025}. There, a simulation size $N$ meant that the number of photons ranged from $0$ to $N-1$. In the present work, since we always set the initial state to the zero-photon state, and photons are created and annihilated in groups of $n$ photons, we focus instead on the Hilbert space $\left\{ \ket{0}, \ket{n}, \ket{2n}, \cdots, \ket{n\times (N-1)} \right\}$, which contains $N$ basis states. This separation of the state space into $n$ invariant subspaces persists in the infinite-dimensional Hilbert space, so that we can safely restrict the calculations to the subspace containing $\ket{0}$. Hence, the truncated Hamiltonian reads $\hat{H}_n^{(N)}=\hat{P}_N\hat{H}_n\hat{P}_N$, where $\hat{P}_N$ is the $N$-dimensional projector onto the first invariant subspace containing $\ket{0}$. We then evaluate the squeezing operator with squeezing parameter having small value $\delta r=0.01$, which gives $\hat{U}_n^{(N)} (\delta r)=\exp\left\{-i\delta r \hat{H}^{(N)}_n\right\}$. Using this operator, we evaluate $\hat{U}^{(N)}_n(r)$ for any value of $r$ using the formula $\hat{U}^{(N)}_n (r)=\left[\hat{U}^{(N)}_n (\delta r)\right]^{r/\delta r}$. For this formula to work straightforwardly, we use $r$ values that are integer multiples of $\delta r$. Following the above procedure, we obtain the squeezed vacuum state $\ket{\psi_n^{(N)}(r)}$ for a range of $r$ values. As a representative physical quantity, in this work we focus on the average photon number $\bra{\psi_n^{(N)}(r)} \hat{a}^{\dagger} \hat{a} \ket{\psi_n^{(N)}(r)}$, which we plot as a function of $r$ to analyse the dynamics.

\section{Simulation results}
\label{Sec:Simulations}

In this section, we present simulation results that demonstrate the parity dependence of the squeezing dynamics, i.e.~the unitary operator $\hat{U}^{(N)}_n$. We also present the results of additional simulations that probe the spectrum of the generator $\hat{H}_n^{(N)}$.

\subsection{Parity dependence}
\label{Sec:SimulationsParity}

\begin{figure}[h]
\includegraphics[width=8cm]{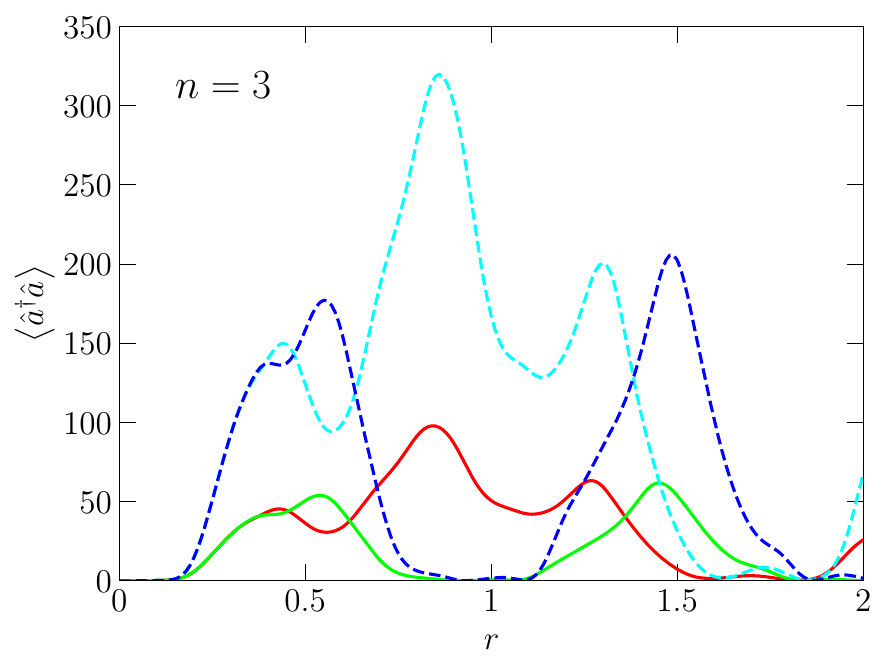}
\includegraphics[width=8cm]{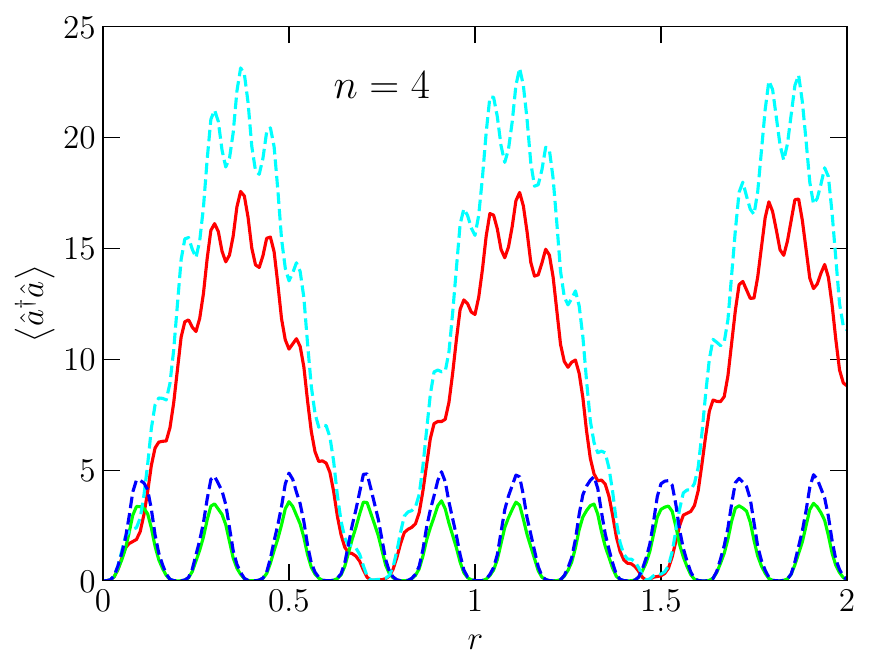}
\includegraphics[width=8cm]{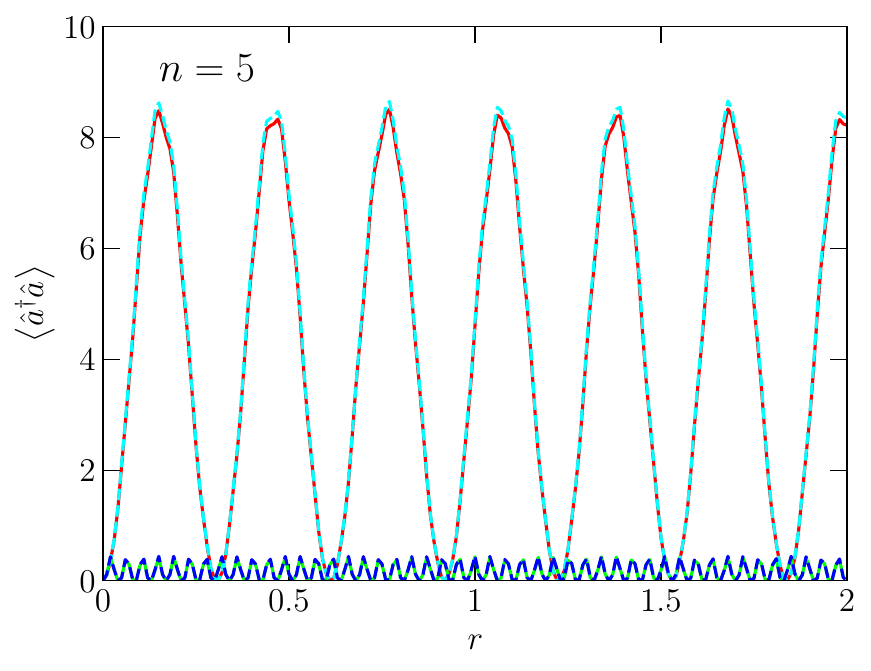}
\includegraphics[width=8cm]{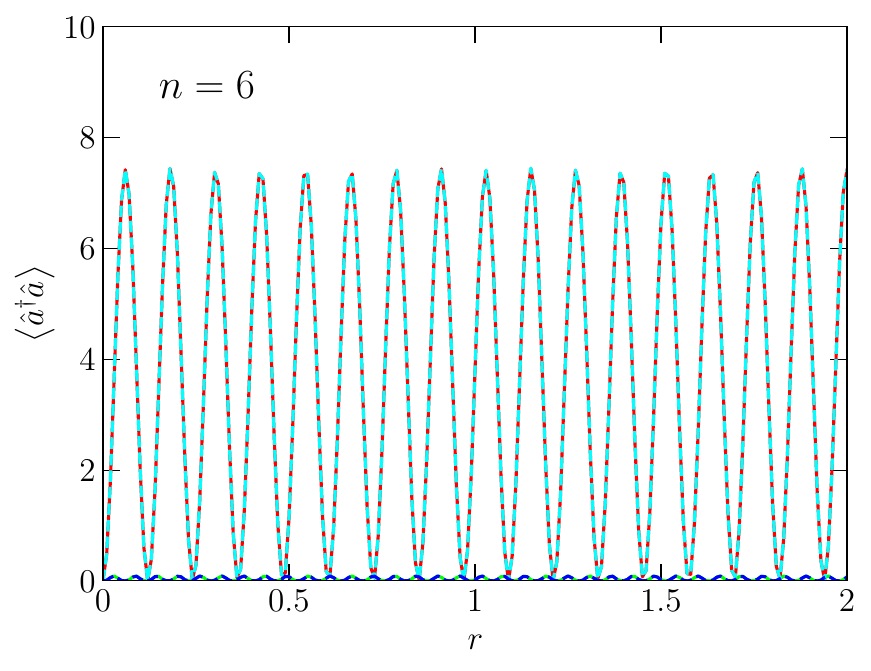}
\caption{Average photon number $\left\langle \hat{a}^{\dagger} \hat{a} \right\rangle$ for the state $\hat{U}^{(N)}_n(r)\ket{0}$ as a function of the squeezing parameter $r$. The red, green, cyan and blue lines correspond, respectively, to $N=1000$, 1001, $10^4$ and $10^4+1$. Although not immediately visible in the figure, the $N=1001$ and $10^4+1$ simulation results in the cases $n=5,6$, as well as the $N=1000$ and $10^4$ simulation results, agree with each other. The maximum height of the odd-$N$ oscillations is 0.4 and 0.09 for $n=5$ and 6, respectively.}
\label{Fig:AveragePhotonNumberVsSqueezingParameter}
\end{figure}

We start by plotting the average photon number $\bra{\psi_n^{(N)}(r)} \hat{a}^{\dagger} \hat{a} \ket{\psi_n^{(N)}(r)}$ as a function of the squeezing parameter $r$ in Fig.~\ref{Fig:AveragePhotonNumberVsSqueezingParameter}. If we focus on the red and cyan lines, we have the situation discussed in Ref.~\cite{Ashhab2025}. For $n=3$ and $n=4$, the amplitude of the oscillations depends on the truncation size $N$ in the simulation. Furthermore, the amplitude seems to diverge when the results are extrapolated to $N\rightarrow\infty$. However, the overall shape of the oscillations, including the period, are independent of $N$ for sufficiently large values of $N$. This dichotomy in the behaviours of the oscillation amplitude and period is quite unusual. For larger values of $n$ ($n\geq 5$), both the amplitude and the period appear to converge well. The $n=5$ plot in Fig.~\ref{Fig:AveragePhotonNumberVsSqueezingParameter} shows a very small change in amplitude when we go from $N=10^3$ to $N=10^4$. The $n=6$ plot shows no discernible change in the transition from $N=10^3$ to $N=10^4$. One would deduce from these results that, at least for $n\ge5$, the limit $N\rightarrow\infty$ is well-defined for $\hat{U}_n^{(N)}(r)$ and can be reliably obtained from simulations in finite-dimensional spaces.

A remarkable change occurs when we consider odd values of the truncated Hamiltonian size $N$. All the plots in Fig.~\ref{Fig:AveragePhotonNumberVsSqueezingParameter} exhibit a dramatic change in the transition from even to odd values of $N$. Odd values of $N$ lead to faster, lower-amplitude oscillations. This tendency is especially clear for $n\geq 4$. Interestingly, if we focus on odd-$N$ simulations, the simulations with different values of $N$ behave similarly to what we observed with even values of $N$. For $n=3$ and $n=4$, the oscillation amplitude depends on $N$, while this dependence becomes weaker and eventually disappears as we increase $n$. For $n\ge 5$ we observe again a convergent behaviour of the squeezing dynamics which becomes independent of $N$ for $N\rightarrow\infty$. One would again conclude that $\hat{U}_n^{(\infty)}(r)$ is well defined -- but obviously this operator is \emph{different} from the $\hat{U}_n^{(\infty)}(r)$ obtained with even $N$.

We are forced to conclude that the extrapolation of the finite-dimensional operator $\hat{U}_n^{(N)}(r)$ to $N\rightarrow\infty$ is \emph{not unique}: there are two different operators in this limit, respectively obtained with even and odd $N$. Which one of them corresponds to the physical (untruncated) Hilbert space? One could argue that the assumption of an unbounded photon number in any experimental setup is itself unphysical, and the ``real" operator corresponding to a given implementation of $\hat{H}_n$ is just one of the $\hat{H}_n^{(N)}$. However, it is obvious that a hard cut-off of the photon number which moreover fixes its parity does not correspond to any realistic experiment. Starting from a harmonic oscillator driven by the appropriate operator at the appropriate frequency to realize the effective Hamiltonian in Eq.~\eqref{Eq:Hamiltonian_n}, one must deal with a potentially unbounded photon number, and our results show that the squeezed state for $n\ge3$ is not uniquely obtainable from finite-size simulations. This means that the Hamiltonian (\ref{Eq:Hamiltonian_n}) has no physical interpretation without further information about the actually realized system, which cannot be described solely by the expression of $\hat{H}_n$. 

For example, when modelling superconducting circuits based on Josephson junctions, we typically set up a mathematical description which employs phase (or flux) and charge variables of the electric circuit. This model can be mapped to the problem of a single particle moving in a complicated potential energy landscape in a high-dimensional space. Then, we derive an approximate Hamiltonian that describes the one or few degrees of freedom that are relevant to the physical behaviour of the circuit. The derivation typically assumes that the system remains sufficiently close to its ground state with only a small number of excitations. To be more precise, it is in fact possible to design superconducting harmonic oscillators (not containing Josephson junctions) that can be populated by very large numbers of photons. The difficulty arises when we add Josephson junctions to introduce nonlinearity into the system. Then we obtain a complicated Hamiltonian with many nonlinear terms. If we assume that one of the experimentally controllable parameters is modulated at the appropriate frequency for multi-photon generation, we can make the rotating-wave approximation to obtain Eq.~(\ref{Eq:Hamiltonian_n}). Importantly, multiple steps in this derivation rely on the assumption that the number of photons remains small. As a result, the model Hamiltonian in Eq.~(\ref{Eq:Hamiltonian_n}) cannot remain valid up to infinite photon numbers. If we excite the circuit above a certain energy scale, the circuit will generally behave completely differently. Even before we reach the energy scale of complete breakdown of the model, as the photon number increases, other nonlinear terms in the Hamiltonian that are ignored in the derivation of Eq.~(\ref{Eq:Hamiltonian_n}) will affect the dynamics and lead to modified dynamics that depend on these additional terms. We will provide a few specific examples of this situation below. 

To put this discussion in context, the authors of Ref.~\cite{Eriksson} estimated that the idealized Hamiltonian in their experiment was reliably valid for a Hilbert space extending up to a few photons only. Our simulations go orders of magnitude beyond this realistic scale. It is therefore important to keep in mind that the theoretical model of Eq.~(\ref{Eq:Hamiltonian_n}) extending to extremely large photon numbers cannot be realized physically.

It is worth noting here that the operator in Eq.~(\ref{Eq:GeneralizedSqueezingOp}) with $n=1$ and $n=2$ has been extensively used in the study of quantum optics, even when dealing with systems that are expected to exhibit nonlinear behavior at large photon numbers, such as superconducting circuits. An important difference in the cases $n=1$ and $n=2$ is that, for sufficiently weak nonlinearities, the predictions of the model become independent of the Hamiltonian details at large photon numbers. Therefore, simulation results for short evolution times become independent of the truncation details if the truncation size is sufficiently large and the nonlinearity is sufficiently weak. In this case, one can avoid dealing with the questions of truncation size and nonlinearity in a simplified model, whereas these questions are unavoidable for $n\geq 3$. The mathematical reason for the independence of truncation details for $n=1$ and $n=2$, as opposed to the cases $n\geq3$, is the self-adjointness of $\hat{H}_1$ and $\hat{H}_2$, which will be discussed in Section \ref{Sec:Math}. 

One might wonder if the mathematical complications encountered above can be avoided in a natural way by using the real-space representation, i.e.~solving the Schr\"odinger equation
\begin{equation}
i \frac{\partial\psi(x,t)}{dt} = \frac{i}{2^{n/2}} \left[ \left(x - \frac{\partial}{\partial x} \right)^n - \left(x + \frac{\partial}{\partial x} \right)^n \right] \psi(x,t).
\label{Eq:SchroedingerEquationRealSpace}
\end{equation}
The squeezing parameter $r$ would be proportional to the total evolution time in this description. Since the (untruncated) problems in the real-space and Fock-space representations are isomorphic to each other, the non-uniqueness of the operator $\hat{U}_n(r)$ must also appear in the real-space representation. Another important point to note in this context is that, although the ladder operator description of the harmonic oscillator is sometimes derived from an underlying real-space description, we cannot say that the real-space representation should take precedence over the Fock-space representation. For example, Eq.~(\ref{Eq:Hamiltonian_n}) is not derived as a rewriting of a real-space Hamiltonian of the form in Eq.~(\ref{Eq:SchroedingerEquationRealSpace}), but rather by working in the Fock-space representation, and making a few assumptions and approximations within the derivation of Eq.~(\ref{Eq:Hamiltonian_n}), notably the rotating-wave approximation. From this point of view, the Fock-space representation is in fact a more natural framework for discussing the squeezing problem.

\begin{figure}[h]
\includegraphics[width=8cm]{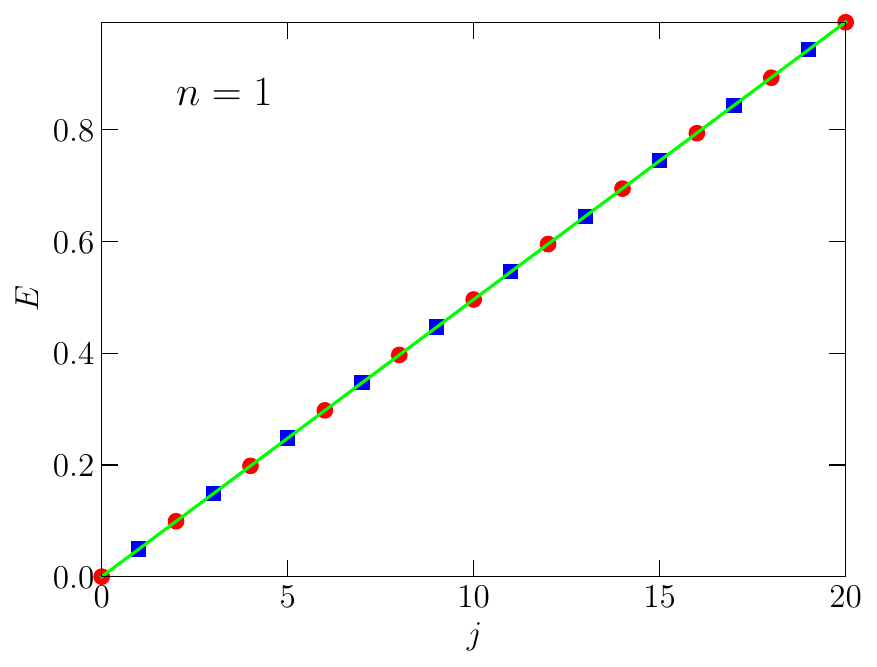}
\includegraphics[width=8cm]{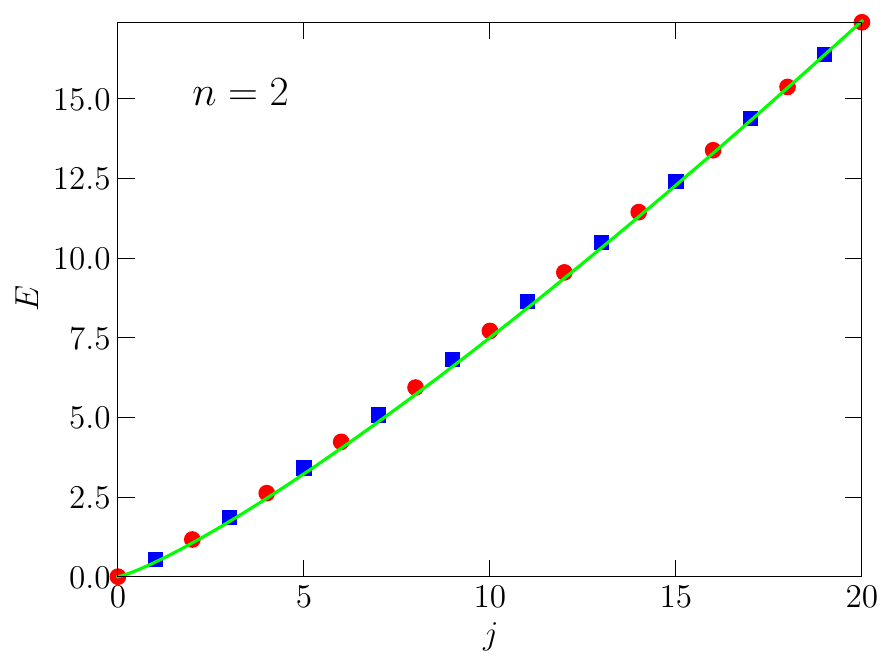}
\includegraphics[width=8cm]{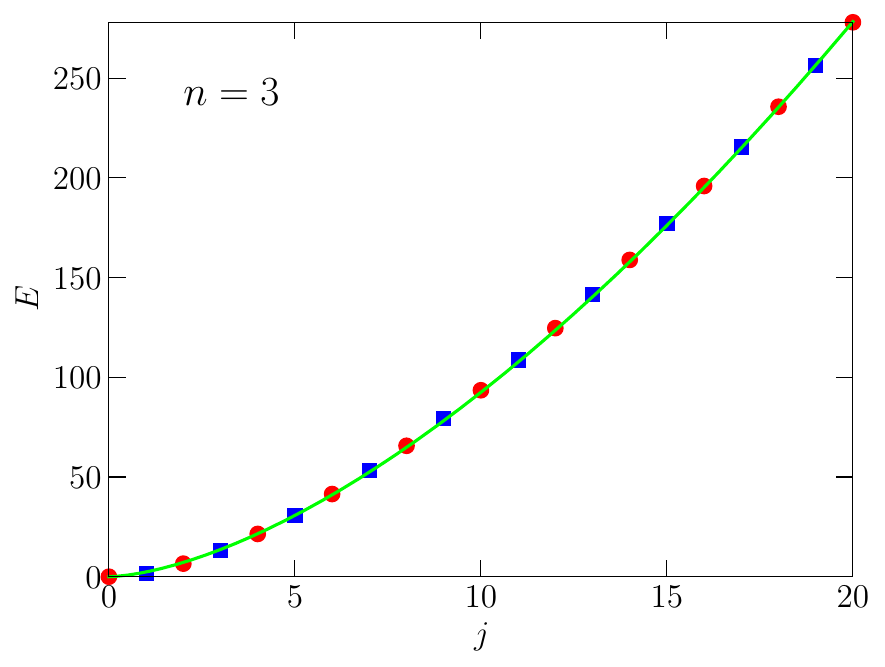}
\includegraphics[width=8cm]{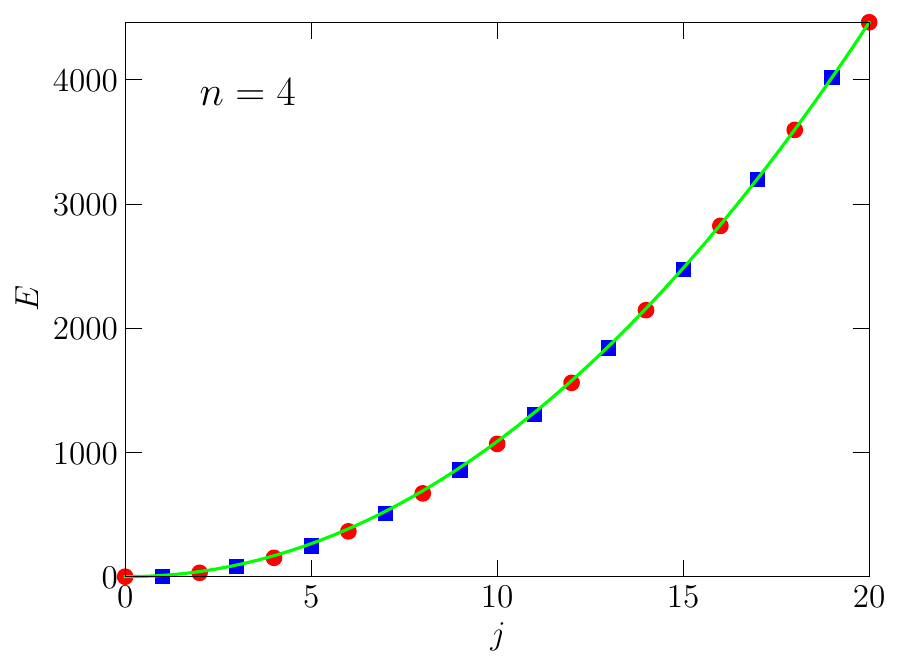}
\caption{Eigenvalues of $\hat{H}_n$ in the middle of the spectrum for $n=1$, 2, 3 and 4. Here we plot the few lowest non-negative eigenvalues in each case. In each panel, we plot the eigenvalues for two truncation sizes together. The two truncation sizes are $N=1000$ (blue squares) and 1001 (red circles). The green lines are fits of the form $E=\alpha j^\gamma$ with fitting parameters $\gamma=1.001$, 1.217, 1.590 and 2.035 for $n=1$, 2, 3 and 4, respectively.}
\label{Fig:SpectrumOddEvenCombined}
\end{figure}

We now move on to analyse the parity dependence of the simulations from a different point of view. Clearly, the observed dynamics is closely related to the spectrum of the Hamiltonian $\hat{H}^{(N)}_n$. In Fig.~2 we plot the eigenvalues near the middle of the spectrum, i.e.~near zero, for neighbouring even and odd values of $N$. These spectra reveal one feature: the spectrum is symmetric about zero, so that every positive eigenvalue has a negative counterpart with the same absolute value. This symmetry of the spectrum can be proven both for $\hat{H}_n$ and $\hat{H}_n^{(N)}$. As a result, the odd-$N$ case must have at least one zero eigenvalue. Since there is no reason for the spectrum to have degeneracies, which is confirmed by our numerical calculations, it is natural that the even- and odd-$N$ cases have different spectra. This result, in turn, makes it less surprising that the two cases produce different dynamics for finite $N$. 

The analysis of these spectra reveals a few additional interesting features. By observing each spectrum as a whole, we find that, focusing on the non-negative eigenvalues and excluding the eigenvalues near the upper end of the spectrum, the eigenvalues follow a power-law function of the form $E_j=\alpha j^{\gamma}$. The exponent in the fitting function is the same for both odd and even values of $N$. Furthermore, if we take eigenvalues in an alternating order from one odd- and one even-$N$ spectrum, and we plot the resulting list of eigenvalues, these are accurately described by a fitting function with the same power-law exponent. Currently, we have no mathematical explanation for this surprising fact. For $n=1,2$, our observation is only relevant for finite $N$, because the operators $\hat{H}_{1,2}$ have real and continuous spectra, extending from $-\infty$ to $+\infty$.

One more point worth noting here in relation to the spectrum is the fact that the eigenvector associated with the zero eigenvalue in the odd-$N$ case has a large weight at zero photon number, a tendency that becomes increasingly strong with increasing $n$. As such, the vacuum state $\ket{0}$ has a large overlap with the zero-eigenvalue state. This fact explains why the oscillation amplitude is small in the odd-$N$ case, especially as we increase $n$. Similarly, for the even-$N$ case, the vacuum state $\ket{0}$ is almost entirely a superposition of the two eigenvectors at the center of the spectrum, and these two eigenvectors are almost entirely localized in the lowest two Fock states. Another interesting observation from our numerical simulations is that for $n\geq 3$ and even $N$, the smallest positive eigenvalue approaches $\sqrt{n!}$ with increasing $n$, which suggests that the lowest two Fock states decouple from the higher Fock states.

\subsection{Regulating the dynamics via Kerr interaction}
\label{Sec:Simulations_Kerr}

\begin{figure}[h]
\includegraphics[width=8cm]{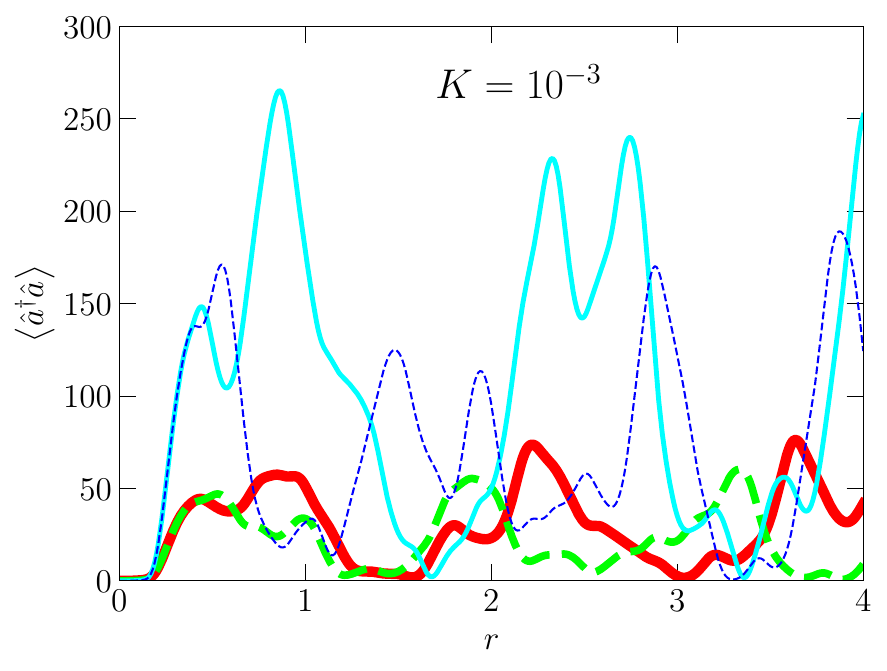}
\includegraphics[width=8cm]{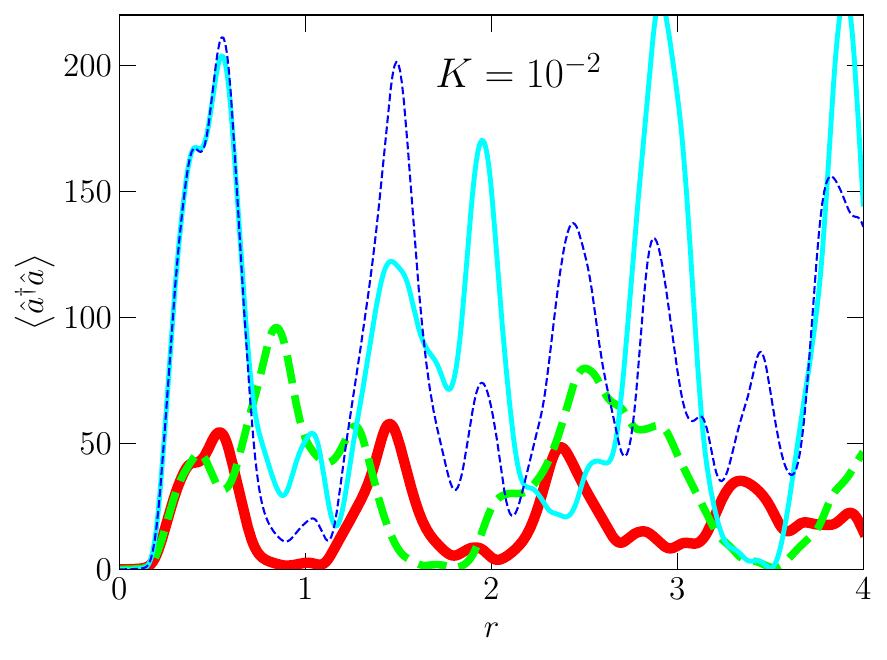}
\includegraphics[width=8cm]{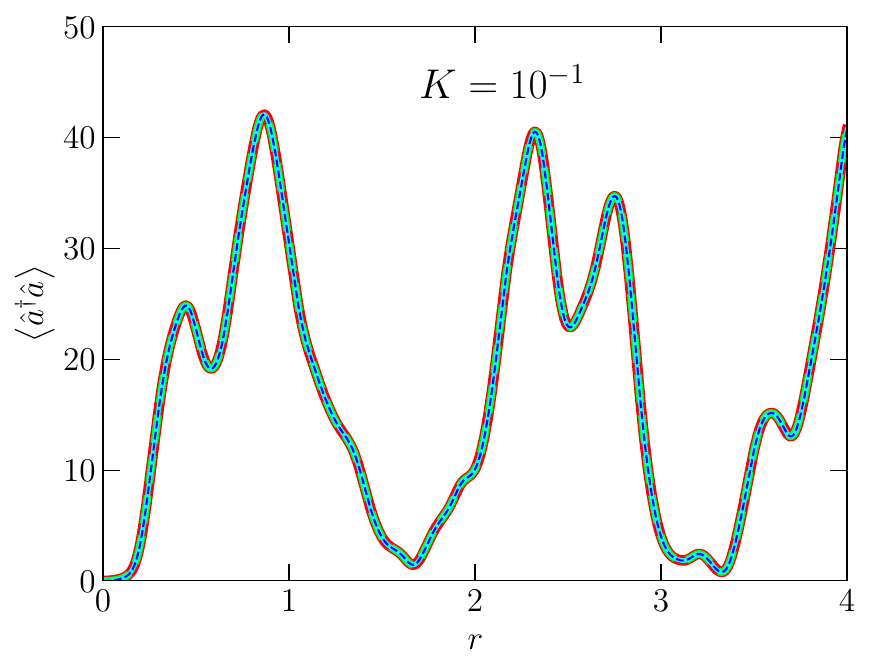}
\includegraphics[width=8cm]{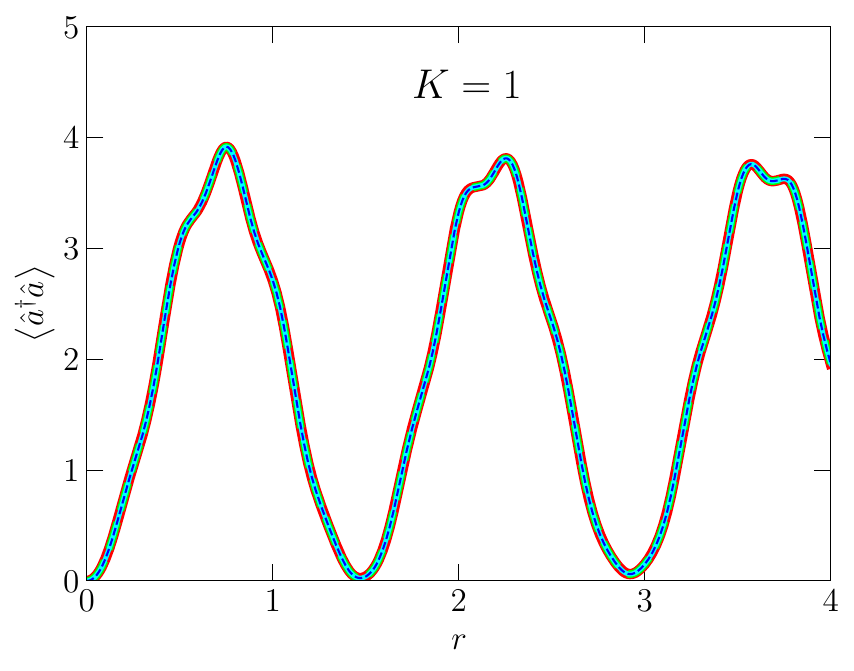}
\caption{Average photon number $\left\langle \hat{a}^{\dagger} \hat{a} \right\rangle$ for the state $\hat{U}_3^{(N)}(r)\ket{0}$ as a function of the squeezing parameter $r$ with a quadratic Kerr term of varying strength ($K$). The red, green, cyan and blue lines correspond, respectively, to $N=1000$, 1001, $10^4$ and $10^4+1$. The dynamics becomes independent of the truncation size when the largest Kerr (diagonal) matrix element in the Hamiltonian becomes larger than the largest squeezing (off-diagonal) matrix element.}
\label{Fig:AveragePhotonNumberVsSqueezingParameterKerr3Quadratic}
\end{figure}

The parity dependence of the results of Section \ref{Sec:SimulationsParity} unambiguously shows that $\hat{H}_n$ alone cannot describe the physically realized situation. In fact, as mentioned above, the actual Hamiltonian describing any realistic physical system will inevitably have additional terms that might be negligible for small photon numbers, but can no longer be ignored for sufficiently large photon numbers. With this point in mind, we investigate the possibility of using an alternative Hamiltonian generating a well-behaved evolution as we increase the truncation size to infinity. Specifically, we add a diagonal term to the Hamiltonian, and we choose a term that grows sufficiently fast that it creates a natural cutoff and ensures that the infinite-photon-number regime is not reached during the evolution. It is worth noting that the addition of this term is related to the stabilization of the $n$-photon quantum Rabi model recently studied in Refs.~\cite{Ying,Ayyash2025}.

As a representative example, we consider the tri-squeezing Hamiltonian with an added quadratic Kerr term:
\begin{equation}
\hat{H}_{3, \rm quadratic \ Kerr} = i \left[ \left(\hat{a}^\dagger\right)^3 - \hat{a}^3 \right] + K \left(\hat{a}^\dagger\right)^2 \hat{a}^2.
\label{Eq:Hamiltonian_3_QuadraticKerr}
\end{equation}
where the Kerr coefficient $K$ is positive. In principle, regardless of how small $K$ is, since the Kerr term has more creation and annihilation operators than the squeezing term (four vs three), the Kerr term will be dominant in the infinite-photon-number limit. As a result, it will create a physical energy barrier that will prevent the photon number from reaching infinity. We will see in Section \ref{Sec:Math_Kerr} that the Hamiltonian in Eq.~(\ref{Eq:Hamiltonian_3_QuadraticKerr}) is essentially self-adjoint on Fock states, and therefore its finite-dimensional truncations generate an evolution with a unique limit $N\rightarrow\infty$. The results of our simulations with this Hamiltonian are shown in Fig.~\ref{Fig:AveragePhotonNumberVsSqueezingParameterKerr3Quadratic}. When $K\sim 10^{-2}$, the dynamics is still strongly dependent on $N$ for $N$ up to $10^4$. However, when we increase $K$, we find that all the different simulations with different truncation sizes eventually produce the same results. This indicates that the Kerr term is producing the expected effect: it creates a natural cutoff making the simulation results insensitive to the exact details of the Hamiltonian at large photon numbers. The value of $K$ at which this cutoff effect occurs can be estimated as follows: at the upper photon number end of the Hamiltonian, we replace each creation or annihilation operator by $\sqrt{nN}$. The squeezing term is then of order $(nN)^{3/2}$, while the Kerr term is of order of $K(nN)^2$. If we want the Kerr term to dominate over the squeezing term, we must have $K(nN)^{1/2}>1$. The results in Fig.~\ref{Fig:AveragePhotonNumberVsSqueezingParameterKerr3Quadratic} do indeed obey this rule: when $K=10^{-2}$ and $N=1000$, we have $K(nN)^{1/2}=0.55$, so that the squeezing term is dominant, while when $K=10^{-1}$ and $N=1000$, we have $K(nN)^{1/2}=5.5$, so that the Kerr term is dominant. If we increase the Kerr coefficient further, the amplitude of the oscillations decreases, which is yet another indication that the Kerr term is acting as a natural cutoff that becomes increasingly confining with increasing $K$. Interestingly, the Kerr term not only removes the even-odd difference, but also the overall dependence of the oscillation amplitude on $N$, as can be seen in  Fig.~\ref{Fig:AveragePhotonNumberVsSqueezingParameter} for $K=0$. We note here that, even when $K$ is so large that the results are independent of $N$, the oscillations in Fig.~\ref{Fig:AveragePhotonNumberVsSqueezingParameterKerr3Quadratic} are in general irregular and strongly dependent on $K$, and only become quite regular when $K$ is so large that only a few quantum states are involved in the dynamics. We also note that the results for $N=10^4$ and $10^4+1$ begin to agree with each other at smaller values of $N$, as would be expected from the formula $K(nN)^{1/2}>1$. An interesting case corresponds to $K=10^{-2}$ in Fig.~\ref{Fig:AveragePhotonNumberVsSqueezingParameterKerr3Quadratic}, where the results for $N=10^4$ and $10^4+1$ agree only up to around $r=0.7$ and deviate from each other after that. This result serves as a reminder that small differences between different simulations will eventually lead to significantly different results for sufficiently large values of $r$.

Additional examples that further explore the role of nonlinear Kerr interactions in regulating the generalized squeezing dynamics are presented in Appendix A.

\subsection{Spectrum of the squeezing Hamiltonian}

\begin{figure}[h]
\includegraphics[width=8cm]{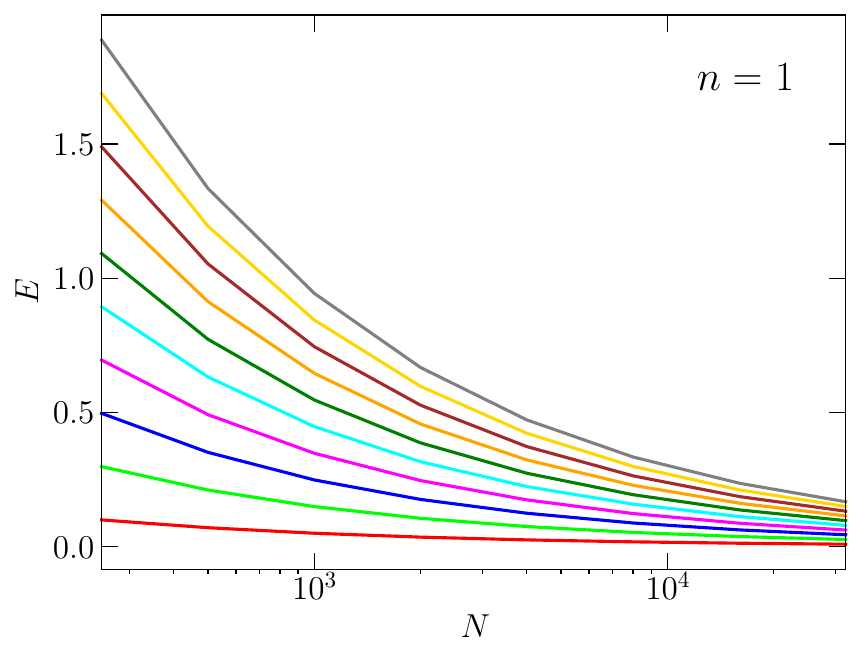}
\includegraphics[width=8cm]{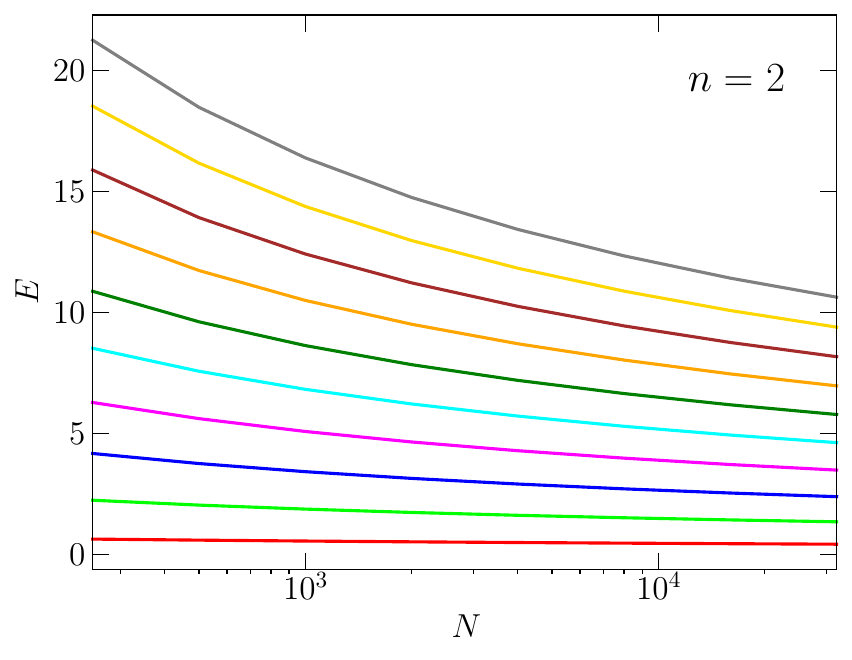}
\includegraphics[width=8cm]{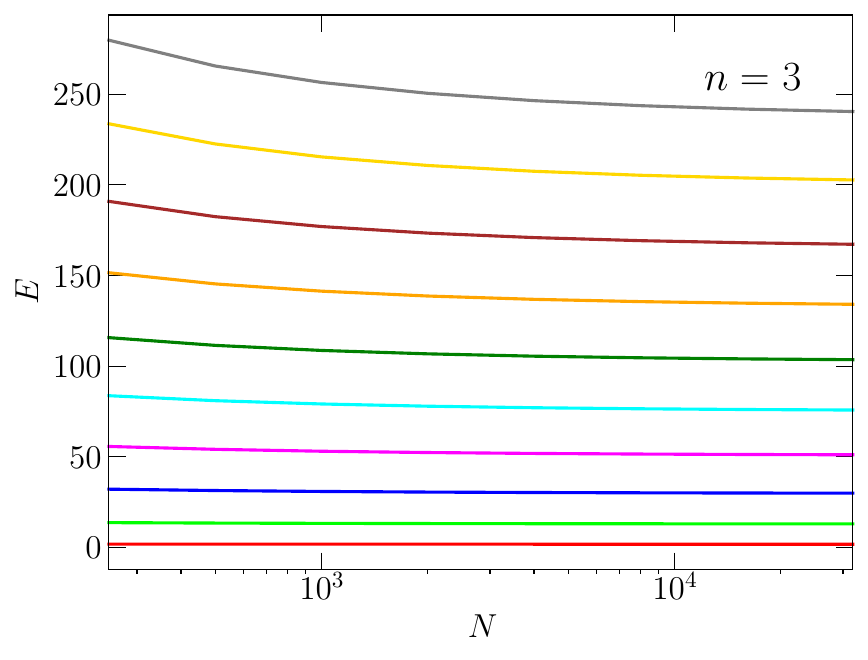}
\includegraphics[width=8cm]{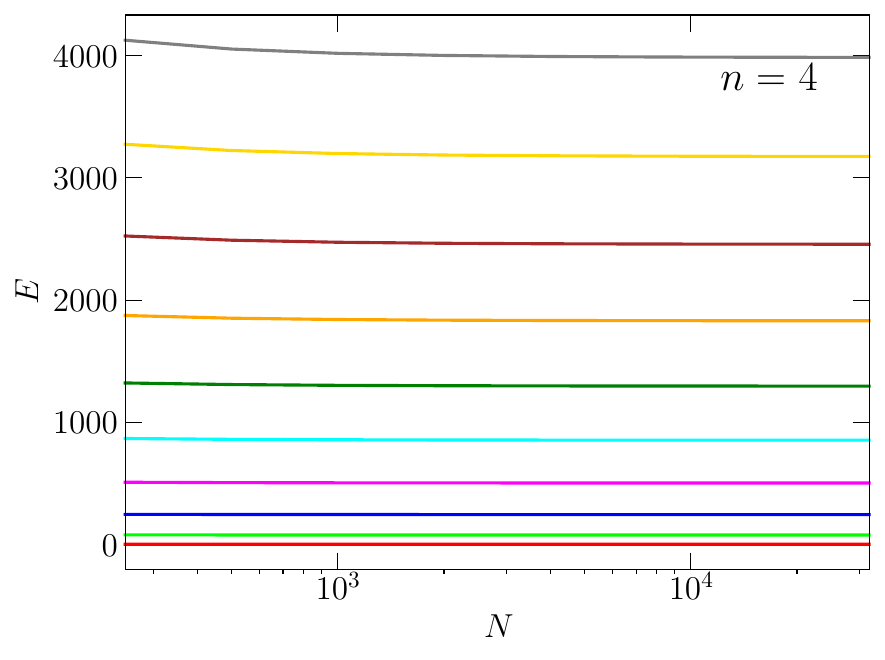}
\caption{Ten smallest positive eigenvalues of $\hat{H}_n$ as functions of truncation size $N$ for $n=1$, 2, 3 and 4. For all the data points, we chose even values of $N$.}
\label{Fig:SpectrumSmallestTenEigenvaluesVsTruncationSize}
\end{figure}

\begin{figure}[h]
\includegraphics[width=8cm]{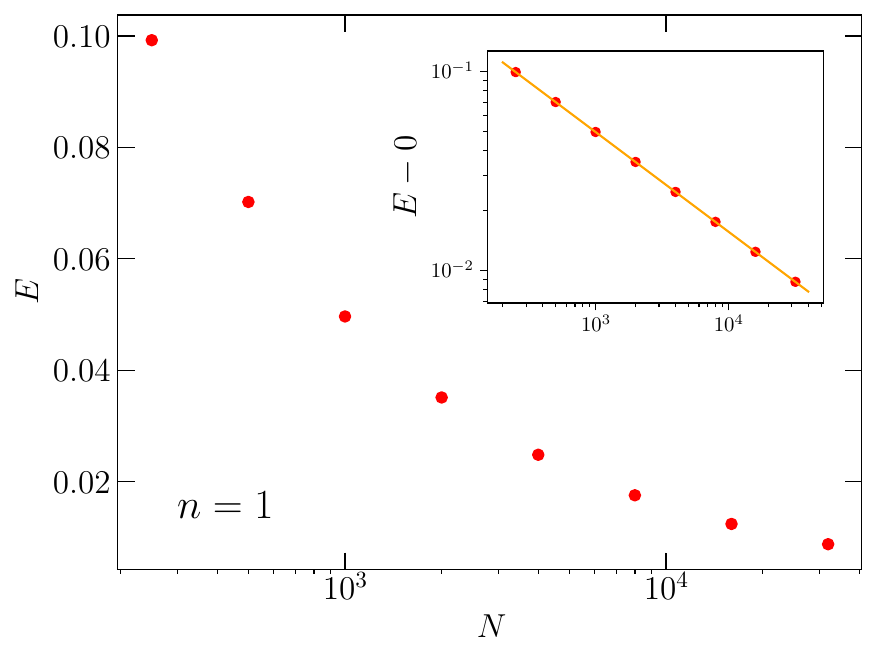}
\includegraphics[width=8cm]{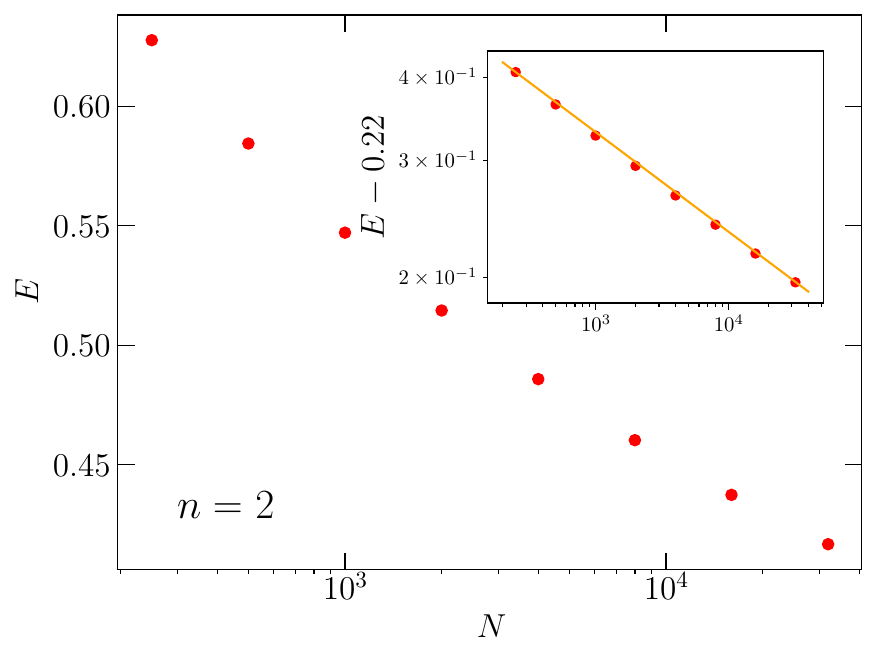}
\includegraphics[width=8cm]{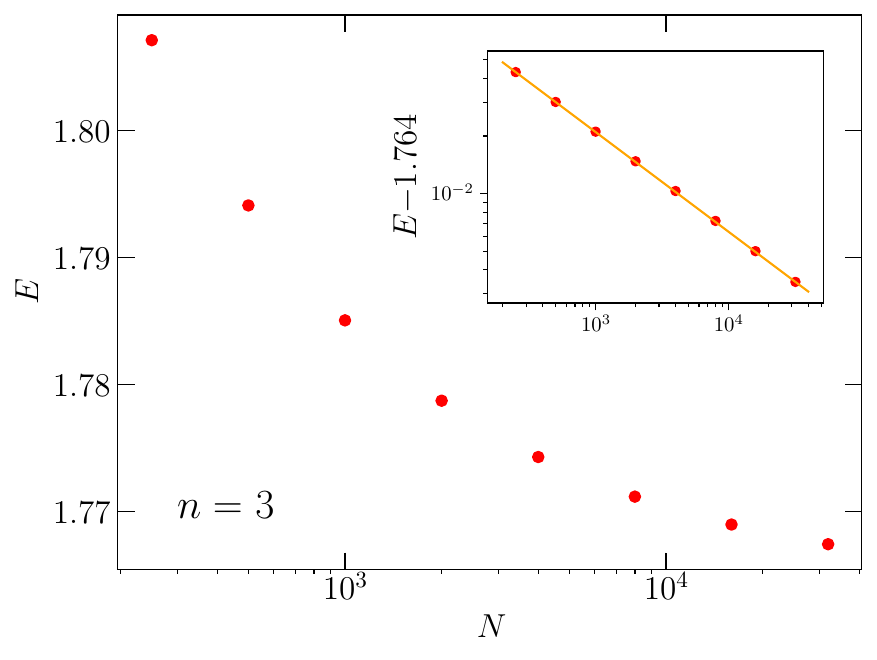}
\includegraphics[width=8cm]{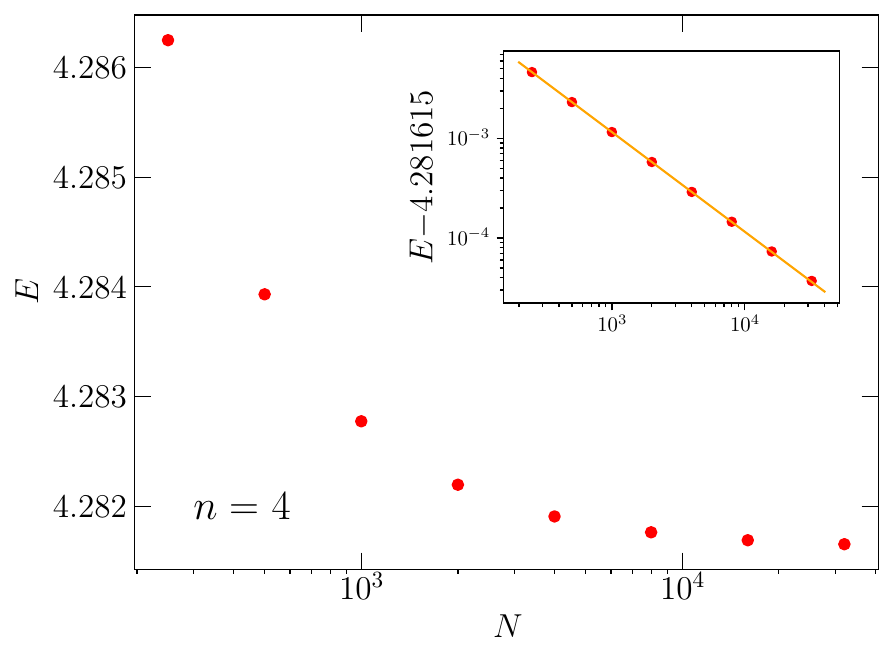}
\caption{Smallest positive eigenvalue of $\hat{H}_n$ as a function of truncation size $N$ for $n=1$, 2, 3 and 4. As in Fig.~\ref{Fig:SpectrumSmallestTenEigenvaluesVsTruncationSize}, we use only even values of $N$ in this figure. The insets show the same data in a log-log plot that demonstrates the asymptotic value of the data if extrapolated to $N\rightarrow\infty$. The orange lines are straight-line fits showing the quality of the agreement.}
\label{Fig:SpectrumSmallestEigenvalueVsTruncationSize}
\end{figure}

\begin{figure}[h]
\includegraphics[width=8cm]{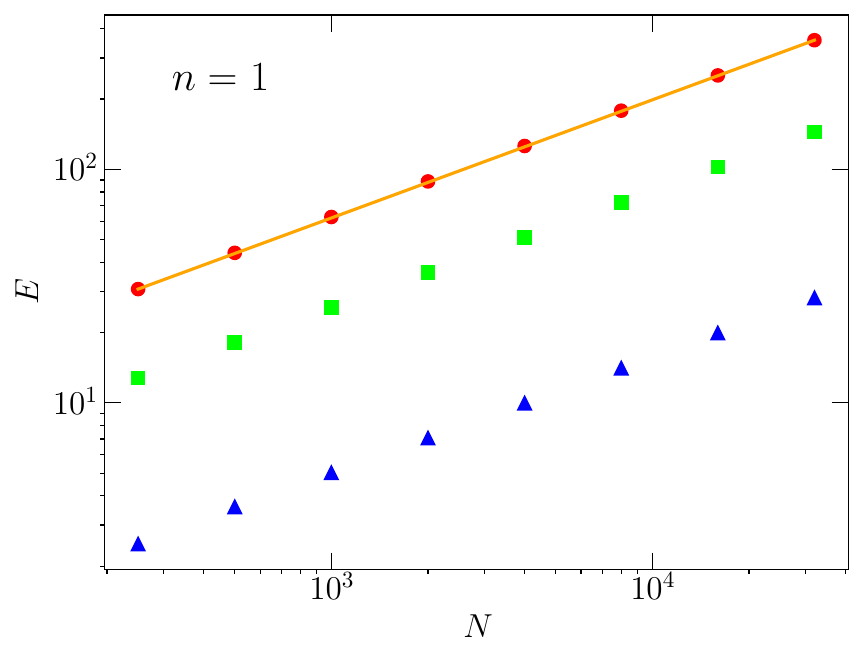}
\includegraphics[width=8cm]{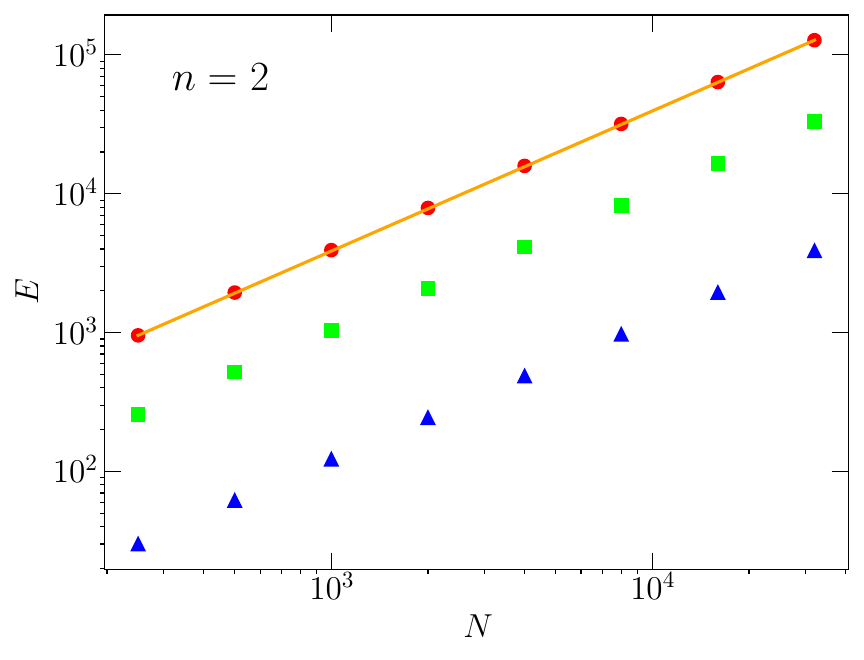}
\includegraphics[width=8cm]{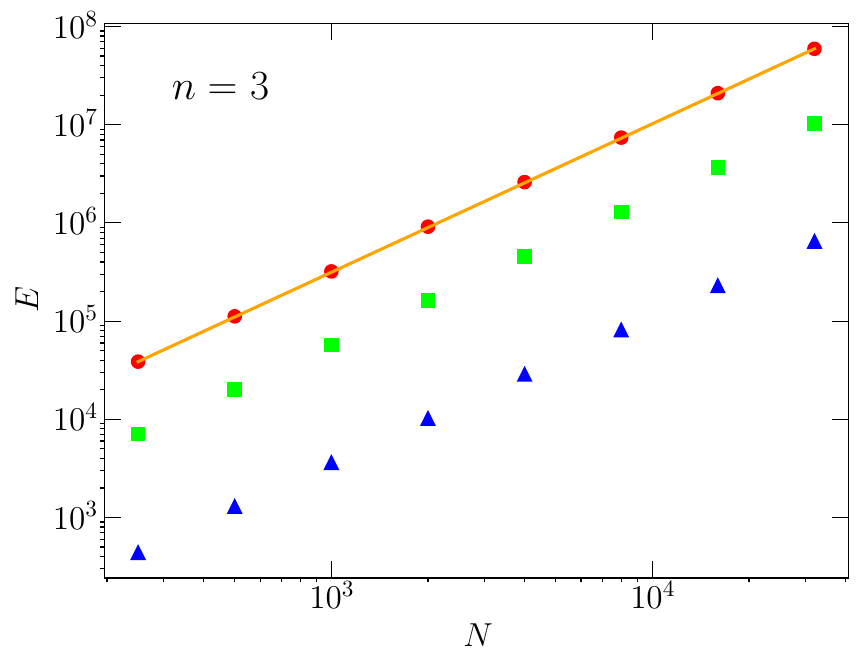}
\includegraphics[width=8cm]{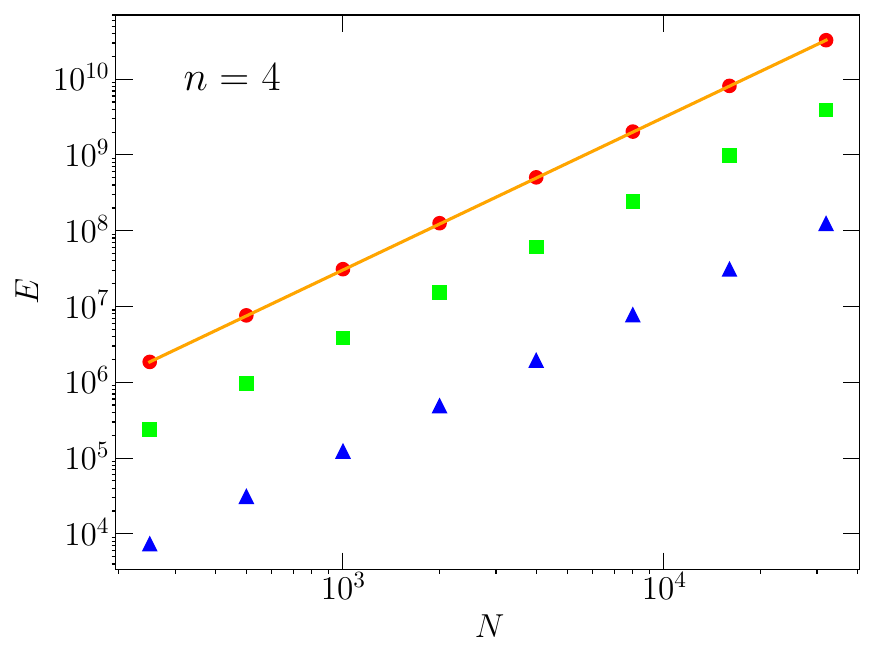}
\caption{Three large eigenvalues of $\hat{H}_n$ as functions of truncation size $N$ for $n=1$, 2, 3 and 4. The red circles, green squares, and blue triangles respectively correspond to the eigenvalues with index $N-1$ (largest eigenvalue), $3N/4$, and $11N/20$. (Note that the centre of the spectrum is at $N/2$.) The orange straight line corresponds to a power-law fit to the red circles. The slopes (i.e.~power-law exponents) are 0.51, 1.01, 1.51 and 2.01 for $n=1$, 2, 3 and 4, respectively. The green-square and blue-triangle data have the same slopes as the red-circle data in all cases.}
\label{Fig:SpectrumLargeEigenvaluesVsTruncationSize}
\end{figure}

We now consider the Hamiltonian $\hat{H}_n^{(N)}$ itself with respect to its (necessarily discrete) spectrum.
 In this analysis, we focus on the properties near the centre of the spectrum, i.e.~excluding the eigenvalues near the edges of the spectrum. We perform some calculations to address this question for $\hat{H}^{(N)}_n$ with different values of $n$. It is worth noting at this point that the cases $n=1$ (displacement operator) and $n=2$ (two-photon squeezing) are well-known in the literature, and $\hat{H}^{(N)}_{1,2}$  accurately predicts the behaviour of $\hat{H}_{1,2}$ for $N\rightarrow\infty$. We will nevertheless include these cases in the following analysis for comparison.

In Fig.~\ref{Fig:SpectrumSmallestTenEigenvaluesVsTruncationSize} we plot the ten smallest positive eigenvalues of $\hat{H}^{(N)}_n$ as functions of $N$. We restrict this analysis to even values of $N$. We performed similar calculations for odd values of $N$, but we do not show the results here, since they are similar to those of the even-$N$ case, although the spectra differ as shown in Fig.~\ref{Fig:SpectrumOddEvenCombined}. In Fig.~\ref{Fig:SpectrumSmallestEigenvalueVsTruncationSize} we focus on the smallest positive eigenvalue and try to identify its asymptotic value in the limit $N\rightarrow\infty$. In Fig.~\ref{Fig:SpectrumLargeEigenvaluesVsTruncationSize} we plot three representative large eigenvalues as functions of $N$ to get an impression of the overall size of the spectrum. By combining the results plotted in Figs.~\ref{Fig:SpectrumOddEvenCombined},~\ref{Fig:SpectrumSmallestTenEigenvaluesVsTruncationSize},~\ref{Fig:SpectrumSmallestEigenvalueVsTruncationSize} and~\ref{Fig:SpectrumLargeEigenvaluesVsTruncationSize}, we can raise some general statements about the spectra of $\hat{H}^{(N)}_n$ in the limit $N\rightarrow\infty$.

The case $n=1$ is quite straightforward: the eigenvalues have a constant spacing that is proportional to $N^{-1/2}$. The spectrum therefore approaches a continuous spectrum as we increase $N$; in the infinite-$N$ limit, an infinite number of eigenvalues converge to zero. The case $n=2$ is the most difficult to analyse: the small eigenvalues keep decreasing throughout the range of $N$ values plotted in Fig.~\ref{Fig:SpectrumSmallestTenEigenvaluesVsTruncationSize}, without a clear indication that the level spacing approaches a finite value. This result suggests that the spectrum will be continuous in the limit of infinite $N$. On the other hand, the asymptotic value obtained from fitting the data in Fig.~\ref{Fig:SpectrumSmallestEigenvalueVsTruncationSize} is nonzero. However, the asymptotic value obtained from the fitting is far outside the range of the plotted data points, so that we cannot consider this asymptotic value reliable. As already mentioned above, $\hat{H}_2$ is self-adjoint and has a continuous spectrum, and numerical simulations with truncated Hamiltonians are known to accurately predict squeezing dynamics, provided the number of photons remains sufficiently small compared to the dimension of the state space in the simulations. 

The cases $n=3$ and $n=4$ seem to behave better. In both cases, the small eigenvalues converge quickly to finite values, indicating that the spectrum for even $N$ converges and is discrete in the infinite-$N$ limit. Likewise, the spectrum for odd $N$ converges and is discrete for $N\rightarrow\infty$ but differs from the even case. A similar behaviour is known from other operators with a continuous set of self-adjoint extensions \cite{reed1}. Similarly, the cases $n\geq 5$, which are not shown in the plots, give discrete spectra which differ for even and odd $N$.

\section{Mathematical mechanisms behind the numerical effects}
\label{Sec:Math}

The numerical results obtained throughout this paper originate in the properties of unbounded operators in an infinite-dimensional Hilbert space. To understand them, we will first briefly revisit some fundamental notions concerning these operators, and in particular, how they behave under finite-dimensional truncations. We will then proceed to show how these considerations will directly allow us to interpret the numerical results shown in the previous section.

\subsection{Self-adjointness and convergence of numerical simulations}

Let \( \hat U(t) \) be the unitary evolution operator of a quantum system on a Hilbert space \(\mathcal{H}\). Formally, the Hamiltonian \( \hat H \) is defined as the generator of this evolution,
\begin{equation}
\label{eq:generator}
\hat H = i \frac{d}{dt} \hat U(t)\Big|_{t=0}. 
\end{equation}
In finite dimensions, \( \hat H \) is simply a Hermitian matrix, and the derivative exists on the entire Hilbert space. However, in infinite-dimensional Hilbert spaces, such as the space of square-integrable wavefunctions on the real line \( L^2(\mathbb{R}) \), the situation is more delicate. Many physically relevant operators are unbounded, meaning they are not defined on all vectors in the Hilbert space. For example, the position operator \( \hat x \) acts as multiplication by \( x \), but \( x \psi \notin L^2(\mathbb{R}) \) for all \( \psi \in L^2(\mathbb{R}) \). A concrete example is the Cauchy distribution \( \psi(x) = \frac{1}{\pi(1 + x^2)} \), which lies in \( L^2(\mathbb{R}) \), while \( x \psi \notin L^2(\mathbb{R}) \). 
As a result, the Hamiltonian cannot be defined as an operator on the full Hilbert space, and one must instead restrict $\hat{H}$ to a \emph{domain} \( \mathcal{D}(\hat H) \subset \mathcal{H} \) consisting of all vectors for which the derivative in Eq.~\eqref{eq:generator} exists in \( \mathcal{H} \). An operator constructed in this way is said to be \emph{self-adjoint}.

In practice, explicitly determining the full domain of self-adjointness is often unfeasible. Therefore, one typically begins with a smaller, more manageable domain \( \mathcal{D}_0 \), and studies the restriction of \( \hat H \) to it. In favourable cases, this is enough to determine the full dynamics: there exists a unique evolution \( \hat U(t) \) compatible with the action of \( \hat H \) on \( \mathcal{D}_0 \). When this occurs, we say that \( \hat H \) is \emph{essentially self-adjoint} on \( \mathcal{D}_0 \), and that \( \mathcal{D}_0 \) is a \emph{core} for \( \hat H \). However, this is not always the case: the chosen domain \( \mathcal{D}_0 \) may be ``too small'', in the sense that it allows for multiple distinct evolution groups (see Fig. \ref{fig:self-adjoint}). These correspond to different self-adjoint extensions of \( \hat H \), which may share the same formal expression but differ in their domains, and hence encode distinct physical behaviour, such as different boundary conditions on the actually realized states.

\begin{figure}[t]
\centering
\begin{tikzpicture}[scale=1, every node/.style={scale=1}]

\definecolor{myblue}{RGB}{0,70,180}
\definecolor{mygreen}{RGB}{0,120,60}
\definecolor{myred}{RGB}{200,0,0}

\draw[thick,myred] (0,0) circle (1cm) node {$\hat H,  \mathcal{D}_0$};
\draw[thick,myblue,rotate=30] (-0.6,-0.1) ellipse (1.7cm and 1.3cm) node[xshift=-0.15cm,yshift=-0.9cm] {$\hat H_2,  \mathcal{D}_2$};
\draw[thick,mygreen,rotate=-30] (-0.6,0.1) ellipse (1.7cm and 1.3cm) node[xshift=-0.15cm,yshift=0.9cm] {$\hat H_1,  \mathcal{D}_1$};

\node at (-3.5cm,0.8cm) {$\hat U_1(t)$};
\draw[thick,<->] (-3cm,0.8cm) -- (-1.7cm,1cm);

\node at (-3.5cm,-0.8cm) {$\hat U_2(t)$};
\draw[thick,<->] (-3cm,-0.8cm) -- (-1.7cm,-1cm);

\end{tikzpicture}
\caption{Pictorial representation of a non–essentially self-adjoint operator $\hat H$ on $\domain_0$. Two distinct self-adjoint extensions $\hat H_1$ and $\hat H_2$ (with dense domains $\domain_1$ and $\domain_2$) both extend $\hat H$ but yield different time evolutions $\hat U_1(t)$ and $\hat U_2(t)$. The diagram is purely schematic, as dense subspaces cannot be represented faithfully in this way.}
\label{fig:self-adjoint}
\end{figure}

Let us now explain how these considerations are relevant to the present work. From a numerical perspective, it is important to recognize that simulations are performed in a proper subspace \( \mathcal{D}_0 \subset \mathcal{H} \) spanned by \textit{finite} linear combinations of elements of a given basis of \(\mathcal{H}\) (typically, the eigenfunctions of a reference Hamiltonian), while the full Hilbert space is spanned by possibly infinite linear combinations. Regardless of how large the number $N$ of eigenfunctions considered for a given state in $\domain_0$ is, we are always operating within a bounded, finite-dimensional approximation. This has direct implications for interpreting numerical simulations involving unbounded operators: the validity of finite-dimensional truncations in the limit $N\to\infty$ is not ensured, and must be assessed \textit{a priori} \cite{fischer-wrong-box,arzani2025effective} or \textit{a posteriori} \cite{etienney2025aposteriori}. The scenario is highly dependent on whether $\hat{H}$ is essentially self-adjoint on $\domain_0$ or not:
\begin{itemize}
    \item[(i)] If \( \hat H \) is essentially self-adjoint on \(\mathcal{D}_0\), numerical simulations performed via finite-dimensional truncations generically converge to the exact dynamics generated by the unique self-adjoint extension of \( \hat H \);
    \item[(ii)] If \( \hat H \) is not essentially self-adjoint on \(\mathcal{D}_0\), there are multiple dynamics that are compatible, in principle, with numerical simulations. In particular:
    \begin{itemize}
        \item If $\hat{H}$ is bounded from below, i.e. it has a finite ground state energy, then numerical simulations will always converge to the dynamics generated by a specific self-adjoint extension of $\hat{H}$, corresponding to the so-called Friedrichs extension~\cite{fischer-wrong-box}; 
        \item If $\hat{H}$ is not bounded from below, it is not clear \textit{a priori} whether numerical simulations converge at all, and one needs to examine the situation on a case-by-case basis.
    \end{itemize}
\end{itemize}
In the present work, for all operators involved in the simulations, we made the implicit choice
\begin{equation}
\mathcal{D}_0 = \mathrm{Span}\{|l\rangle:l\in\mathbb{N}\},
\end{equation}
that is, $\domain_0$ is the space of finite linear combinations of Fock states, which consists precisely of states $\psi \in \hilbert$ such that there is a finite $L \in \mathbb{N}$ and a finite sequence of complex numbers $(c_l)_{l \leq L}$ with
\begin{equation}
    \psi = \sum_{l = 1}^L c_l |l\rangle \, .
\end{equation}
It is important to stress that $\mathcal{D}_0$ is not the whole Hilbert space $\hilbert$ --- states which require $L = \infty$ are missing --- but some smaller, dense, subspace. Common choices of boson Hamiltonians like the number operator \( \hat a^\dagger \hat a \) and the second-order squeezing operator \( \hat H_2 = i(\hat a^\dagger)^2 - i \hat a^2 \) are essentially self-adjoint on $\mathcal{D}_0$, and therefore their finite-dimensional approximations reproduce the correct dynamics in the large-\( N \) limit.

We will show in the remainder of this section that the numerical results observed in this paper can indeed be explained in terms of the essential self-adjointness---or lack thereof---of the operators on the space $\domain_0$. Namely:
\begin{itemize}
    \item The higher-order squeezing Hamiltonian $\hat{H}_n$ with $n\geq3$ is not essentially self-adjoint on $\domain_0$ nor is bounded from below, thus explaining the irregular behaviour observed in the simulations.     
    Still, the even--odd effect observed in the simulations can be precisely explained: finite-dimensional approximations oscillate between the dynamics of \textit{two} distinct self-adjoint extensions. We show this in Section \ref{Sec:Math_noKerr}.
    \item However, when adding a properly chosen regularizing term such as a quadratic Kerr term, one can restore essential self-adjointness and thus regularity of the numerical simulations, as discussed in Section \ref{Sec:Math_Kerr}.
\end{itemize}
The results about self-adjoint extensions building the backbone of these facts are based on~\cite{FischerInPreparation}, where the essential self-adjointness---or lack thereof---of a class of Hamiltonians, including the ones analysed in this paper, is characterized. Here, we showcase how these abstract mathematical results have strict consequences on the present numerical analysis, and prove the convergence results stated above.

\subsection{Non-regularized higher-order squeezing}\label{Sec:Math_noKerr}

We begin by considering the higher-order squeezing operator $\hat{H}_n$ with $n\geq3$, cf. Eq.~\eqref{Eq:Hamiltonian_n}, without any regularizing term. 
This case falls in the class of operators studied in~\cite[Section 4]{FischerInPreparation}, which are not essentially self-adjoint. 
Indeed, $\hat{H}_n$ admits infinitely many self-adjoint extensions parametrized by $n\times n$ unitary matrices. 
Here we will focus on two particular extensions that are sufficient to explain the numerical patterns we observe:
\begin{proposition}[\cite{FischerInPreparation}]
    \label{prop:extensions}
    Let $n \geq 3$ and $\hat{H}_n$ be defined as in~\eqref{Eq:Hamiltonian_n}, with domain $\domain_0$,
    Then $\hat{H}_n$ is not essentially self-adjoint. In particular, there exist two distinct, essentially self-adjoint extensions $\Hodd,\Heven$ of $\hat{H}_n$, having domains
    \begin{align}
        \domain(\Hodd) & = \left\{ \psi_0 + \sum_{i=0}^{n-1}c_i \sum_{j=0}^{\infty} d^{(i)}_{2j} \ket{i+2j n} \, :\, \psi_0 \in \domain_0 , \,(c_i)_{i=0}^{n-1} \in \cnum^n \right\}\,   , \\
        \domain(\Heven) & = \left\{ \psi_0 + \sum_{i=0}^{n-1}c_i \sum_{j=0}^{\infty} d^{(i)}_{2j+1} \ket{i+(2j+1) n} \, :\, \psi_0 \in \domain_0 ,\, (c_i)_{i=0}^{n-1} \in \cnum^n \right\}\,  , 
    \end{align}
    where the $d_j^{(i)}$ are suitably chosen real, positive coefficients \cite{Footnote}. Besides, for every $\psi_0\in\domain_0$,
    \begin{equation}
        \Heven\psi_0=\Hodd\psi_0=\hat{H}_n\psi_0=i \left[ \left(\hat{a}^\dagger\right)^n - \hat{a}^n \right]\psi_0.
    \end{equation}
\end{proposition}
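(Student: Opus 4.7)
The plan is to apply von Neumann's theory of deficiency indices, exploiting the Jacobi-matrix structure of $\hat{H}_n$ in the Fock basis. First I would verify by a direct finite-sum computation that $\hat{H}_n$ is symmetric on $\domain_0$. The crucial structural observation is that both $(\hat{a}^\dagger)^n$ and $\hat{a}^n$ shift the Fock index by $\pm n$, so $\hat{H}_n$ leaves invariant the $n$ mutually orthogonal subspaces
\begin{equation}
\mathcal{H}_i = \overline{\mathrm{span}}\{\ket{i + jn} : j \in \nnum\}, \qquad i = 0, 1, \ldots, n-1.
\end{equation}
The self-adjointness analysis therefore decomposes into $n$ independent one-dimensional problems, and the global deficiency indices of $\hat{H}_n$ on $\domain_0$ are obtained by summing the contributions from each $\mathcal{H}_i$.

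Within a fixed $\mathcal{H}_i$ I would study the deficiency equation $\hat{H}_n^{*} \phi^\pm = \pm i \phi^\pm$, i.e. $[(\hat{a}^\dagger)^n - \hat{a}^n]\phi^\pm = \mp \phi^\pm$. Expanding $\phi^\pm = \sum_{j \geq 0}\phi_j^\pm\ket{i+jn}$ reduces this to a three-term linear recurrence for $(\phi_j^\pm)_j$ with off-diagonal weights $w_j = \sqrt{(i+jn)!/(i+(j-1)n)!}\sim (jn)^{n/2}$; the relation at $j=0$ eliminates one initial condition, leaving a one-parameter family of formal solutions. A discrete WKB/Perron-type analysis then shows that for $n \geq 3$ exactly one linearly independent formal solution is square-summable, so each $\mathcal{H}_i$ contributes one dimension to each deficiency subspace. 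This yields global deficiency indices $(n,n)$, and by von Neumann's theorem $\hat{H}_n$ fails to be essentially self-adjoint and admits a family of self-adjoint extensions labelled by unitaries $U:\ker(\hat{H}_n^{*} - i)\to\ker(\hat{H}_n^{*} + i)$.

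To isolate $\Hodd$ and $\Heven$ I would exploit a parity symmetry of the recurrence: the involution $\phi_j\mapsto(-1)^j\phi_j$ sends solutions of the $+i$ equation in $\mathcal{H}_i$ to solutions of the $-i$ equation. Letting $(d^{(i)}_j)_{j\geq 0}$ denote the real, positive coefficients of the distinguished $\ell^2$ solution in the $+i$ deficiency subspace, this symmetry produces deficiency vectors $\eta_i^\pm = \sum_j (\pm 1)^j d^{(i)}_j \ket{i+jn}$ spanning $\ker(\hat{H}_n^{*}\mp i)$. The two diagonal choices $U = \pm \mathbb{I}$ in von Neumann's formula then give self-adjoint extensions whose domains are $\overline{\domain_0}\oplus\mathrm{span}\{\eta_i^+ \pm \eta_i^-\}$; the cancellations along odd (resp.\ even) $j$ leave vectors supported only on even-$j$ (resp.\ odd-$j$) Fock indices within $\mathcal{H}_i$, reproducing exactly the formulas for $\domain(\Hodd)$ and $\domain(\Heven)$ in the statement. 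The final identity $\Heven\psi_0=\Hodd\psi_0=\hat{H}_n\psi_0$ for $\psi_0\in\domain_0$ is then automatic since both operators extend $\hat{H}_n$.

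The hard part is Step 2, namely the asymptotic analysis of the three-term recurrence that certifies a one-dimensional $\ell^2$ solution space for $n \geq 3$ and none for $n = 1, 2$. A clean framework is to recognize the restricted operator, after the gauge transformation $\tilde{\phi}_j = i^j\phi_j$, as a real symmetric Jacobi matrix with zero diagonal and positive off-diagonal entries $w_j \sim (jn)^{n/2}$: Carleman's criterion $\sum_j w_j^{-1} = \infty$ yields essential self-adjointness immediately for $n = 1, 2$, whereas for $n \geq 3$ the sum converges and one must explicitly produce the decaying asymptotic solution of the recurrence. This decay is marginal—the recurrence coefficients approach unity and only the subleading corrections control $\ell^2$-summability—so the asymptotic analysis, borrowed from the detailed treatment in~\cite{FischerInPreparation}, is the genuinely technical core of the proof. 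A secondary but nontrivial hurdle is matching the von Neumann cancellation pattern in Step 3 with the precise even/odd Fock-index structure of the domains in the statement, which relies essentially on the parity symmetry of the recurrence; without it the two canonical extensions $\Hodd$ and $\Heven$ could only be described implicitly.
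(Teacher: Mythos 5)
The paper contains no proof of Proposition~\ref{prop:extensions}: it is imported verbatim from \cite{FischerInPreparation}, whose accompanying footnote identifies the two extensions precisely as the $U=\pm I$ points of the von Neumann parametrization and the $d_j^{(i)}$ as solutions of a recurrence relation, so your reconstruction is being judged against a citation rather than an in-paper argument. Your route---reduction to the $n$ invariant Jacobi blocks, Carleman's criterion separating $n\le 2$ from $n\ge 3$, deficiency indices $(n,n)$ from the marginal $\ell^2$ asymptotics $|\phi_j|^2\sim w_j^{-1}\sim (jn)^{-n/2}$ of the three-term recurrence, and the parity involution $\phi_j\mapsto(-1)^j\phi_j$ under which $U=\pm I$ cancels the odd- or even-index components to yield exactly the stated domains---is the standard and correct one, evidently the same as in the cited reference, with the genuinely technical asymptotic step honestly deferred to it; the only blemish is the sign slip in the deficiency equation, where $[(\hat a^\dagger)^n-\hat a^n]\phi^\pm=\mp\phi^\pm$ should read $[(\hat a^\dagger)^n-\hat a^n]\phi^\pm=\pm\phi^\pm$, which merely relabels the two deficiency subspaces.
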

In plain words: one can render $\hat{H}_n$ essentially self-adjoint---and thus, uniquely generating a proper unitary evolution $\hat{U}(t)$---by enlarging its domain instead of changing its expression (e.g., adding a regularizing term). Precisely, we can do so by adding a family of vectors to the domain in which either the odd or the even (times $n$) entries are 0, and the remaining entries are predetermined. Those two choices will correspond to two distinct operators, generating \textit{distinct} dynamics $\e^{-i \hat{H}_{n,\rm even}t}$, $\e^{-i \hat{H}_{n,\rm odd}t}$.

What we did in the present work was to consider finite-dimensional truncations of squeezing operators---and thus, of their dynamics---in the Fock basis $(\ket{l})_{l \in \nnum}$. Mathematically, let $\hat{P}_M$ be the projection onto the first $M$ Fock states, i.e. 
\begin{equation}\label{eq:proj}
    \hat{P}_M = \sum_{l=0}^{M-1}\ket{l}\!\bra{l} \, .
\end{equation}
Then, in our simulations, we considered the dynamics generated by the bounded operators $\hat{H}_{n,M} = \hat{P}_M \hat{H}_n \hat{P}_M$ for some large---but finite---$M$. 
In the following we prove our main result, namely that in the limit of large $M$, the dynamics converge either to $\e^{-i \hat{H}_{n,\rm even}t}$ or $\e^{-i \hat{H}_{n,\rm odd}t}$, depending on whether the truncation $M$ is taken to be even or odd:
\begin{theorem}
\label{thm:convergence}
    Let $\hat{H}_{n,M} = \hat{P}_M \hat{H}_n \hat{P}_M$ be the $M$-dimensional truncation of $\hat{H}_n$, and let $\hat{U}_M(t) = \e^{-i \hat{H}_{n,M} t}$ be the associated time evolution.
    Then, for every $t\in\mathbb{R}$,
    \begin{align}
        \lim_{j \to \infty} \hat{U}_{2j n}(t) \psi & = \e^{-i \Heven t} \psi \quad \forall \psi \in \hilbert \, , \\
        \lim_{j \to \infty} \hat{U}_{(2j+1) n}(t) \psi & = \e^{-i \Hodd t} \psi \quad \forall \psi \in \hilbert \, .
    \end{align}
\end{theorem}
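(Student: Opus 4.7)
The plan is to deduce the strong convergence of the unitary groups from strong resolvent convergence of their generators. By the classical Trotter--Kato theorem (see e.g.\ Reed--Simon~VIII.21), for self-adjoint operators strong resolvent convergence is equivalent to strong convergence of the associated unitary groups, uniformly on compact time intervals. It therefore suffices to prove that $\hat{H}_{n,2jn}\to \Heven$ and $\hat{H}_{n,(2j+1)n}\to\Hodd$ in the strong resolvent sense. By the standard core criterion (Reed--Simon~VIII.25), it is in turn enough to verify pointwise convergence of the truncated operators to the corresponding limit on a core of that limit.

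The natural choice of core is the full domain $\domain(\hat{H}_{n,\ast})$ of the limiting operator itself, with $\hat{H}_{n,\ast}\in\{\Heven,\Hodd\}$, whose explicit description is given by Proposition~\ref{prop:extensions}: every element decomposes as $\psi_0 + \sum_{i=0}^{n-1}c_i\xi_i^{(\ast)}$, with $\psi_0\in\domain_0$ and $\xi_i^{(\ast)}$ a prescribed square-summable sum of Fock states of the form $\ket{i+k n}$, the block index $k$ running over either the even or the odd non-negative integers. For $\psi_0\in\domain_0$ the convergence is immediate: once $M$ exceeds the largest Fock index in the expansion of $\psi_0$, one has $\hat{H}_{n,M}\psi_0=\hat{H}_n\psi_0$ exactly, so the identity is already attained at finite $M$. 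The substantive part of the argument is therefore the convergence $\hat{H}_{n,M}\xi_i^{(\ast)}\to\hat{H}_{n,\ast}\xi_i^{(\ast)}$ on the deficiency-type vectors $\xi_i^{(\ast)}$.

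For these, a direct computation of $\hat{P}_M[i(\hat{a}^\dagger)^n-i\hat{a}^n]\hat{P}_M\xi_i^{(\ast)}$ in the Fock basis shows that $\hat{H}_{n,M}\xi_i^{(\ast)}-\hat{H}_{n,\ast}\xi_i^{(\ast)}$ splits into two kinds of contributions: a \emph{bulk} tail supported on Fock indices $\geq M$, which vanishes in $\hilbert$-norm as $M\to\infty$ because $\hat{H}_{n,\ast}\xi_i^{(\ast)}\in\hilbert$ has a square-summable Fock expansion; and a \emph{boundary} contribution localised at the highest surviving $n$-block index $M/n$ in the truncation. A block-by-block inspection shows that this boundary term vanishes in the limit precisely when the parity of the truncation matches the extension being approached---even $M/n$ for $\Heven$, odd $M/n$ for $\Hodd$---whereas it fails to vanish in the opposite pairing. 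This parity matching is the operator-theoretic manifestation of the fact that $\xi_i^{(\ast)}$ was tailored so as to lie in $\domain(\hat{H}_{n,\ast})$ and not in the domain of the other extension. Summing over $i\in\{0,\dots,n-1\}$ and combining with the trivial contribution on $\domain_0$ yields convergence on the chosen core, hence the required strong resolvent convergence, hence the claim on all of $\hilbert$.

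The main obstacle is controlling the boundary term quantitatively, which requires the precise asymptotic behaviour of the coefficients $d^{(i)}_k$ appearing in Proposition~\ref{prop:extensions}. These coefficients are determined by a discrete second-order recurrence inside each invariant subspace $\mathcal{K}_i=\overline{\mathrm{span}}\{\ket{i+kn}:k\in\nnum\}$, on which $\hat{H}_n$ acts as a symmetric Jacobi matrix with off-diagonal entries growing like $(kn)^{n/2}$. For $n\geq 3$ this places the problem outside the Carleman class and in the indeterminate moment-problem regime, where Nevanlinna theory supplies the needed asymptotics and, crucially, identifies the two distinguished self-adjoint extensions that are selected by even and odd truncations. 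The detailed version of this analysis---and in particular the identification of the two parity-dependent truncation limits with $\Heven$ and $\Hodd$---is carried out in~\cite{FischerInPreparation}; once it is granted, the remainder of the proof is the routine Trotter--Kato assembly described above.
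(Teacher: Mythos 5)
Your overall architecture---reduce strong convergence of the unitaries to strong convergence of the generators on a core of the limit (Trotter--Kato), split a generic domain vector into a finite Fock combination $\psi_0$ plus the infinite ``deficiency-type'' part, and dispose of $\psi_0$ by noting that the truncation acts exactly once $M$ is large enough---matches the paper's proof, which invokes the Kato approximation theorem in its semigroup form rather than via strong resolvent convergence; that difference is immaterial. The gap is in the one step that carries the actual content of the theorem: the boundary term. You assert that controlling it ``requires the precise asymptotic behaviour of the coefficients $d^{(i)}_k$'' and defer both the asymptotics (via Nevanlinna theory) and the identification of the two parity-dependent truncation limits with $\Heven$ and $\Hodd$ to~\cite{FischerInPreparation}. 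But that identification is precisely the statement being proved here, so as written your argument is circular at its crux, and the quantitative estimates you invoke are never actually produced.

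Moreover, no such asymptotics are needed. The paper's proof observes that the cross term $\hat{P}_{2jn}\hat{H}_n(1-\hat{P}_{2jn})\psieven$ is \emph{identically zero} for every $j$, not merely vanishing in the limit: since $\braket{m}{(\hat{a}^\dagger)^n|l}=0$ for $l>m$ and $\braket{m}{\hat{a}^n|l}=0$ for $l>m+n$, the only Fock coefficients of $\psieven$ that can straddle the truncation edge are $c_l$ with $2jn\le l\le 2jn+n-1$, i.e.\ exactly those of the form $c_{i+2jn}$ with $0\le i<n$---and these vanish by the very definition of $\domain(\Heven)$ in Proposition~\ref{prop:extensions}. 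Hence $\hat{P}_{2jn}\hat{H}_n\hat{P}_{2jn}\psieven=\hat{P}_{2jn}\Heven\psieven\to\Heven\psieven$ follows from strong convergence of the projections alone. The parity selection is thus a purely combinatorial support argument about which matrix elements of $\hat{a}^n$ cross the cutoff; no Jacobi-matrix or indeterminate-moment-problem machinery enters, and the proof is elementary and self-contained once this is seen.
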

\begin{proof}
    We will prove the statement concerning $\Heven$; the proof for $\Hodd$ is analogous. To this end, we  will show that, for all $\psi \in \domain(\Heven)$, 
    \begin{equation}\label{eq:claim1}
        \lim_{j \to \infty} \hat{P}_{2jn }\hat{H}_n \hat{P}_{2jn} \psi = \Heven \psi \, ,
    \end{equation}
    with $\hat{P}_{2jn}$ being the projector over the first $2jn$ Fock states, cf. Eq.~\eqref{eq:proj}. This will prove convergence of the associated unitary groups by virtue of the Kato approximation theorem~\cite[Theorem 4.8]{nagel-oneparametersemigroups-1999}.
    
    Let $\psi \in \domain(\Heven)$. By Prop.~\ref{prop:extensions}, there exists $\psi_0 \in \domain_0$ and $\psieven$ such that $\psi = \psi_0 + \psieven$, where
    \begin{itemize}
        \item $\psi_0\in\domain_0$, i.e.~it is a \textit{finite} linear combination of Fock states: there exists $L \in \nnum$ and coefficients $a_1,\dots,a_L$ such that
        \begin{equation}\label{eq:psi0}
            \psi_0 = \sum_{l = 0}^{L-1}a_l \ket{l}.
        \end{equation}
        Furthermore, $\hat{H}_n\psi_0=\Heven\psi_0$.
        \item $\psieven$ is given by a generally \textit{infinite} linear combination of Fock states,
    \begin{equation}\label{eq:psieven}
        \psieven = \sum_{l = 0}^{\infty} c_l \ket{l}\ ,
    \end{equation}
    for some suitable coefficients $c_l$ satisfying the following property: the coefficients are $0$ for $l = i+2jn$ with $0\leq i < n$ and $j \in \nnum$, while they are generally nonzero for $l = i +(2j+1)n$. 
    \end{itemize}
    We will prove that the actions of the operators $\hat{P}_{2jn}\hat{H}_n\hat{P}_{2jn}$ and $\Heven$ coincide, in the limit $j\to\infty$, separately on $\psi_0$ and $\psieven$; that is, we claim
    \begin{align}\label{eq:claim2}
         \lim_{j \to \infty} \hat{P}_{2jn }\hat{H}_n \hat{P}_{2jn} \psi_0 &= \Heven \psi_0 \, ,\\ \label{eq:claim3}
         \lim_{j \to \infty} \hat{P}_{2jn }\hat{H}_n \hat{P}_{2jn} \psieven &= \Heven \psieven \, ,
    \end{align}
    which, by linearity, imply Eq.~\eqref{eq:claim1}.
    
    Let us begin by proving Eq.~\eqref{eq:claim2}. Since $\psi_0$ is a combination of finitely many Fock states $\ket{0},\ket{1},\ldots,\ket{L-1}$ (cf. Eq.~\eqref{eq:psi0}), we have $\hat{P}_{L}\psi_0=\psi_0$. Besides, as $\hat{H}_n$ can create at most $n$ photon excitations, we will also have $\hat{P}_{L+n}\hat{H}_n\psi_0=\hat{H}_n\psi_0$. Therefore,
    \begin{equation}
        \hat{H}_{n,L+n} \psi_0 = \hat{P}_{L+n} \hat{H}_n \hat{P}_{L + n} \psi_0 = \hat{P}_{L+n} \hat{H}_n \psi_0 = \hat{H}_n \psi_0 \, =\Heven\psi_0
    \end{equation}
    as $\Heven\psi_0=\hat{H}_n\psi_0$;  whence, \textit{a fortiori}, 
\begin{equation}
    \lim_{j \to \infty} \hat{P}_{2jn }\hat{H}_n \hat{P}_{2jn} \psi_0 = \hat{P}_{L+n} \hat{H}_n \hat{P}_{L + n} \psi_0= \Heven \psi_0,
\end{equation}
since $2jn\geq L+n$ for $j$ large enough. We thus proved Eq.~\eqref{eq:claim2}.

We now proceed to the proof of Eq.~\eqref{eq:claim3}. To this end, we note that $\braket{m}{ (\hat{a}^\dagger)^n| l} = 0$ for $l > m$ and $\braket{m}{ \hat{a}^n |l} = 0$ for $l > m+n$.
    Thus, expanding $\psieven$ as in Eq.~\eqref{eq:psieven},
    and noting that $1-\hat{P}_{2jn}=\sum_{l=2jn}^\infty\ket{l}\!\bra{l}$, we have

    \begin{align}
        \label{proofeq:convergence}
        \hat{P}_{2jn} \Heven (1-\hat{P}_{2jn}) \psieven & = \sum_{m = 0}^{2jn-1}\sum_{l=2jn}^\infty \ket{m}\!\bra{m}\Heven\ket{l}\!\braket{l}{\psieven}\\ 
                \label{proofeq:convergence-second}
        & = \sum_{m = 0}^{2jn-1} \sum_{l = 2jn}^{\infty} c_l \ket{m}\!\braket{m}{ \hat{H}_n|l }\\
                \label{proofeq:convergence-third}
        & = \sum_{m = 0}^{2jn-1} \sum_{l = 2jn}^{\infty} c_l \ket{m}\!\braket{m}{ (i (\hat{a}^\dagger)^n - i \hat{a}^n)|l }\\
        \label{proofeq:convergence-adagger}& =  -i \sum_{m = 0}^{2jn-1} \sum_{l = 2jn}^{\infty} c_l \ket{m}\!\braket{m}{  \hat{a}^n |l } \\
        \label{proofeq:convergence-a}& = -i \sum_{m = 0}^{2jn-1} \sum_{l = 2jn}^{2jn+n-1} c_l \ket{m}\!\braket{m}{  \hat{a}^n |l}\, , 
    \end{align}
where in Eq.~\eqref{proofeq:convergence} we used the explicit expression of the projectors $\hat{P}_{2jn}$ and $1-\hat{P}_{2jn}$; in Eq.~\eqref{proofeq:convergence-second}, we used the fact that $\hat{H}_{n,\rm even}$ acts as $\hat{H}_n$ on all vectors in $\domain_0$ (thus, \textit{a fortiori}, on all Fock states) and the equality $\braket{l}{\psieven}=c_l$ which comes from Eq.~\eqref{eq:psieven}; in Eq~\eqref{proofeq:convergence-adagger}, the fact that, since $l > m$, all terms $\braket{m}{(\hat{a}^\dag)^n|l}$ vanish; and finally, in Eq.~\eqref{proofeq:convergence-a}, the equality $\braket{m}{(a)^n|l}=0$ whenever $l>m+n$.

We thus see that the only coefficients $c_l$ appearing in Eq.~\eqref{proofeq:convergence-a} are those with $2jn\leq l\leq 2jn+n-1$. But these coefficients are precisely those that can be written as $c_{i+2jn}$ with $0 \leq i < n$, which are zero. Therefore, we proved
\begin{equation}
     \hat{P}_{2jn} \Heven (1-\hat{P}_{2jn}) \psieven=0,
\end{equation}
whence
\begin{equation}\label{eq:limit1}
     \hat{P}_{2jn} \Heven \hat{P}_{2jn}\psieven=\hat{P}_{2jn} \Heven (\hat{P}_{2jn}-1+1)\psieven=\hat{P}_{2jn}\Heven\psieven.
\end{equation}
On the other hand,
\begin{equation}\label{eq:limit2}
     \hat{P}_{2jn} \Heven \hat{P}_{2jn}\psieven=\hat{P}_{2jn}\hat{H}_n \hat{P}_{2jn}\psieven,
\end{equation}
as $\hat{P}_{2jn}\psieven\in\domain_0$ (it is a finite linear combination of Fock states), and again $\hat{H}_n$ and $\Heven$ coincide on such vectors. Thus, combining Eqs.~\eqref{eq:limit1}--\eqref{eq:limit2}, we finally have
\begin{align}
    \lim_{j\to\infty}\hat{P}_{2jn} \hat{H}_n \hat{P}_{2jn}\psieven&=\lim_{j\to\infty}\hat{P}_{2jn} \Heven \hat{P}_{2jn}\psieven\\&=\lim_{j\to\infty}\hat{P}_{2jn}\Heven\psieven=\Heven\psieven,
\end{align}
since $\hat{P}_{2jn}$ converges strongly to the identity. We thus proved Eq.~\eqref{eq:claim3} as well.

Having proven both Eqs.~\eqref{eq:claim2}--\eqref{eq:claim3}, we proved Eq.~\eqref{eq:claim1}: for every $\psi\in\domain(\Heven)$, $\hat{P}_{2jn }\hat{H}_n \hat{P}_{2jn}\psi$ converges to $\Heven\psi$. As anticipated, we can now invoke the Kato approximation theorem~\cite[Theorem 4.8]{nagel-oneparametersemigroups-1999}: as all operators $\hat{P}_{2jn }\hat{H}_n \hat{P}_{2jn}$, being bounded and self-adjoint, admit the dense domain $\domain(\Heven)$ as a core, and they converge strongly to $\Heven$ on this core, we have
    \begin{equation}
        \lim_{j \to \infty} \e^{-i \hat{P}_{2jn}\hat{H}_n \hat{P}_{2jn} t} \psi = \e^{-i \Heven t} \psi \quad \forall \psi \in \hilbert \, ,
    \end{equation}
    thus concluding the proof.
 \end{proof}
 
 We thus showed that the dynamics of the truncated Hamiltonian $\hat{H}_{n,M}$ converges to different self-adjoint extensions of $\hat{H}_n$ depending on whether we increase $M$ in even multiples of $n$ ($M = 2jn$) or odd multiples of $n$ ($M=2jn+n$).
 In Section~\ref{Sec:Theory}, we noted that the dynamics of the vacuum state $\ket{0}$ is confined to the subspace $\{\ket{0},\ket{n},\ket{2n},\dots\}$, hence we chose as an approximation space the $N$ dimensional subspace $\{\ket{0},\ket{n},\ket{2n},\dots,\ket{n \times(N-1)}$.
 We then analysed the time evolution $\hat{U}_{Nn}(t)\ket{0} = \e^{-i \hat{H}_{n,nN} t}\ket{0} $ of the vacuum state.
By Theorem~\ref{thm:convergence}, we can now analytically conclude that these approximations converge to different time evolutions for even and odd $N$:
\begin{align}
   \lim_{N \to \infty, \,N \text{ even}} \e^{-i \hat{H}_{n,N n} t}\ket{0} & =  \lim_{j \to \infty} \e^{-i \hat{H}_{n,2j n} t}\ket{0}  = \e^{-i \Heven t}\ket{0} \, , \\
    \lim_{N \to \infty, \,N \text{ odd}} \e^{-i \hat{H}_{n,N n} t}\ket{0} & =  \lim_{j \to \infty} \e^{-i \hat{H}_{n,(2j+1) n} t}\ket{0}   = \e^{-i \Hodd t}\ket{0} \, . 
\end{align}
In particular, both limiting time evolutions are physical, in the sense that they correspond to self-adjoint operators---but distinct ones.
In an actual experiment, the physics at hand would encode the domain, and thus the ``real'' self-adjoint extension, similar to how boundary conditions determine domains in different systems.  
As the observations in Section~\ref{Sec:Simulations} appear for both the even and the odd limit, we can conclude that they are not just numerical artifacts, but profound features of the two self-adjoint extensions of $\hat{H}_n$ corresponding to the two truncation schemes. However, it is unlikely that an actual experimental realization corresponds to one or the other of these extensions because the respective restrictions on the states are too artificial. There are many more extensions which have not been analyzed here and determine unitary time evolutions (respectively squeezing operations) as well. From a physical viewpoint it is more feasible to identify the full operator  corresponding to the actual physics and check whether it is essentially self-adjoint as we do in the following section.

Finally, let us remark that the peculiar effect observed here---convergence to different self-adjoint extensions for even and odd $N$---can only happen because the operator $\hat{H}_n$ is neither bounded from below nor from above.
If it were bounded from below, the time evolution would converge to a unique self-adjoint extension, the so--called Friedrichs extension for all $N$~\cite{fischer-wrong-box}, so that no irregular behaviour would be directly detected.

\subsection{Regularized higher-order squeezing operators}\label{Sec:Math_Kerr}

Since the irregular behavior of simulations for $\hat{H}_n$ observed in this paper can be explained in terms of the lack of essential self-adjointness of the Hamiltonian, it is natural to guess that, on the other hand, the absence of such irregularities in the presence of additional Kerr terms arises from the fact that such terms make the Hamiltonian essentially self-adjoint.

Indeed, the following result was stated in \cite[Section 5]{FischerInPreparation}, see Example 5.3:

\begin{proposition}[\cite{FischerInPreparation}]\label{prop:esssa}
    Let $n\geq3$, $h\in\mathbb{N}$, $K>0$, and $\hat{H}_{n,h\,\rm Kerr}$ the operator with domain $\domain_0$ defined by
    \begin{equation}
        \hat{H}_{n,h\,\rm Kerr}=i\left[(\hat{a}^\dag)^n-\hat{a}^n\right]+K(\hat{a}^\dag)^h\hat{a}^h.
    \end{equation}
    Then:
    \begin{itemize}
        \item If $n>2h$, $\hat{H}_{n,h,\rm Kerr}$ is not essentially self-adjoint;
        \item If $n<2h$, $\hat{H}_{n,h,\rm Kerr}$ is essentially self-adjoint;
        \item If $n=2h$, $\hat{H}_{n,h,\rm Kerr}$ is essentially self-adjoint when $K>2$, and is not essentially self-adjoint when $K<2$.
    \end{itemize}
\end{proposition}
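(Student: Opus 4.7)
The approach is to reduce the question, via the invariant-subspace decomposition already exploited in Section~\ref{Sec:Theory} and in Proposition~\ref{prop:extensions}, to a Jacobi-matrix problem, and then to apply the Weyl limit point / limit circle dichotomy in each of the three regimes $n<2h$, $n>2h$, $n=2h$. First, $\hat{H}_{n,h\,\rm Kerr}$ preserves each of the $n$ subspaces $\hilbert_i=\overline{\mathrm{span}\{\ket{i+kn}:k\in\nnum\}}$, so essential self-adjointness on $\domain_0$ is equivalent to essential self-adjointness of each restriction on its natural core of finite combinations of Fock states. After absorbing the imaginary phases of $i[(\hat a^\dagger)^n-\hat a^n]$ by a diagonal unitary, each restriction becomes a symmetric Jacobi matrix $J^{(i)}$ on $\ell^2(\nnum)$ whose off-diagonal and diagonal entries satisfy
\begin{equation}
a_k\sim(kn)^{n/2},\qquad b_k\sim K(kn)^h,\qquad k\to\infty.
\end{equation}
By the Weyl alternative, $J^{(i)}$ is essentially self-adjoint iff, for some (hence any) non-real $z$, the second-order recursion $(J^{(i)}-z)\psi=0$ admits at most one linearly independent $\ell^2$ solution, and the whole problem reduces to an asymptotic analysis of this recursion.

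For $n<2h$, the simplest route is a Kato--Rellich argument. The diagonal operator $M=K(\hat a^\dagger)^h\hat a^h$ is positive and essentially self-adjoint on $\domain_0$, and a direct computation on Fock states would yield, for every $\varepsilon>0$,
\begin{equation}
\|i[(\hat a^\dagger)^n-\hat a^n]\psi\|^2\leq\varepsilon\|M\psi\|^2+C_\varepsilon\|\psi\|^2\qquad\forall\psi\in\domain_0,
\end{equation}
since the ratio $\|\hat{H}_n\ket{l}\|^2/\|M\ket{l}\|^2\sim l^{n-2h}\to 0$ can be made arbitrarily small outside a finite set of $l$. Kato--Rellich then delivers essential self-adjointness of $\hat{H}_{n,h\,\rm Kerr}$. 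For $n>2h$, the Kerr term is asymptotically subdominant, and the recursion behaves at leading order like the unperturbed $\hat{H}_n$; combining the explicit deficiency analysis underlying Proposition~\ref{prop:extensions} with a Levinson / Poincaré--Perron perturbation argument, one shows that both linearly independent solutions of $(J^{(i)}-z)\psi=0$ remain in $\ell^2$, so the limit circle case persists and the operator fails to be essentially self-adjoint. For the critical regime $n=2h$, dividing the recursion by $k^h$ produces the constant-coefficient asymptotic recursion
\begin{equation}
\psi_{k+1}+K\psi_k+\psi_{k-1}\to 0,
\end{equation}
with characteristic roots $\lambda_\pm=(-K\pm\sqrt{K^2-4})/2$. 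For $K>2$ these are real with $|\lambda_-|<1<|\lambda_+|$, so by Poincaré--Perron exactly one true solution decays and one grows, leaving at most one $\ell^2$ solution and thus essential self-adjointness. For $K<2$ the roots lie on the unit circle and the solutions are only marginally bounded; a Birkhoff--Harris--Lutz refinement of the asymptotic expansion should extract a common power-law envelope of decay rate larger than $1/2$, placing both solutions in $\ell^2$ and putting the operator in the limit circle regime.

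The main obstacle is the critical regime $n=2h$ with $K<2$: the leading asymptotic recursion is degenerate (no solution decays exponentially), and the $\ell^2$ character of both solutions is decided only at the subleading order. Pinning the transition precisely at $K=2$ requires a careful higher-order WKB-type expansion with uniform control of the remainder, so as to ensure on the one hand that the Poincaré--Perron hypotheses truly apply on the hyperbolic side $K>2$, and on the other hand that the extracted power-law envelope is sharp enough to place \emph{both} oscillatory solutions in $\ell^2$ for every $K<2$. A secondary technical point is the $n>2h$ case: although the Kerr term is formally subdominant, verifying that it does not destroy the $\ell^2$ character of the two deficiency solutions of $\hat{H}_n$ constructed in Proposition~\ref{prop:extensions} requires a discrete Gronwall-type estimate for the perturbed recursion, rather than a naive dominant-balance argument.
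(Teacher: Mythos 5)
You should first note that the paper does not actually prove this proposition: it is imported verbatim from Ref.~\cite{FischerInPreparation} (``the following result was stated in [Section 5]\ldots, see Example 5.3''), so there is no in-paper argument to compare against. Judged on its own merits, your strategy is the natural one and, as far as I can tell, correct: decompose into the $n$ invariant subspaces, gauge away the factors of $\pm i$ to get a real Jacobi matrix with weights $a_k\sim(kn)^{n/2}$ and diagonal $b_k\sim K(kn)^h$, and run the limit point / limit circle dichotomy. Your Kato--Rellich argument for $n<2h$ is essentially a complete proof (the only care needed is that $\|\hat{H}_n\psi\|^2$ is not literally $\sum_l|c_l|^2\|\hat{H}_n\ket{l}\|^2$ because $\ket{l+n}$ and $\ket{l-n}$ components from different $l$ can overlap; a triangle-inequality factor of $2$ fixes this, and $\domain_0$ is trivially a core for the diagonal operator $M$). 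For $n=2h$ your identification of the critical constant is right: $b_k/a_k\to K$, the asymptotic characteristic roots satisfy $\lambda_+\lambda_-=1$ with $|\lambda_\pm|\neq1$ iff $K>2$, giving limit point via Poincar\'e--Perron, while for $K<2$ the oscillatory solutions carry the universal envelope $a_k^{-1/2}\sim k^{-n/4}$, which is square-summable precisely because $n\geq3>2$ --- the same mechanism that makes the bare $\hat{H}_n$ limit circle and that distinguishes $n=1,2$ from $n\geq3$ throughout the paper.

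The gaps that remain are the ones you have already flagged, and they are real but standard: bare Poincar\'e--Perron controls ratios, not moduli, so extracting the $a_k^{-1/2}$ envelope in the oscillatory regimes ($n>2h$, and $n=2h$ with $K<2$) genuinely requires a discrete Levinson-type theorem (Benzaid--Lutz / Janas--Naboko style) with uniform remainder control, including verification of the dichotomy and summability hypotheses for these particular weight sequences; the subdominant-diagonal case $2h<n\leq 2h+2$, where $\sum b_k/a_k$ diverges, is exactly where a naive perturbative comparison with Proposition~\ref{prop:extensions} would fail and the Gronwall-type estimate you mention is needed. None of this looks like it would break, but it is where the actual work of \cite{FischerInPreparation} must live, so your submission should be read as a correct and well-targeted roadmap rather than a finished proof of the two ``not essentially self-adjoint'' bullets and of the $K>2$ endpoint.
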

In simple terms, $\hat{H}_{n,h,\rm Kerr}$ is or is not essentially self-adjoint depending on which of the two components is of higher order in the photon operators (which correspond precisely to $n$ and $2h$), and thus dominates in the large-photon limit. If $n=2h$, then the relative strengths between the two terms---with the notation of this paper, the Kerr coefficient $K$---becomes the relevant quantity.

In particular, we can use these results on essential self-adjointness to prove that numerical simulations of higher-order squeezing operators converge, provided that a dominant Kerr term is added for regularization.
The simulation results in this paper are in alignment with this picture.
Specifically, our simulations for the Hamiltonians \eqref{Eq:Hamiltonian_3_QuadraticKerr} ($n=3$, $h=2$), \eqref{Eq:Hamiltonian_3_QuarticKerr} ($n=3$, $h=4$), and \eqref{Eq:Hamiltonian_4_QuarticKerr} ($n=4,h=4$) showed that the Kerr term becomes eventually effective in regulating the dynamics with respect to the truncation dimension $N$. As in all three cases $2h>n$, this can be now explained by virtue of Proposition \ref{prop:esssa}: these Hamiltonians are essentially self-adjoint, so that finite-dimensional truncations will eventually converge to the actual evolution. In contrast, the simulations for the Hamiltonian \eqref{Eq:Hamiltonian_4_QuadraticKerr} ($n=4,h=2$) showed convergence in the truncation dimension only for $K>2$, without any even--odd effect, while for $K<2$ dimension-dependent effects appeared. Again, this can now be explained in mathematical terms by Proposition \ref{prop:esssa}: as $2h=n$, the Hamiltonian is only essentially self-adjoint for $K>2$, with $K=2$ precisely being the critical point where essential self-adjointness breaks down.

\section{Conclusion}
\label{Sec:Conclusion}

The mathematical modelling and numerical simulation of generalized squeezing have proven to be challenging tasks with many potential pitfalls for higher-order squeezing with $n\geq 3$. The latest of these pitfalls is the dependence of the dynamics and the spectrum on the parity of the finite-dimensional truncations. After demonstrating the appearance of this parity dependence with a few examples, we analysed the photon number dynamics and the spectrum of the truncated squeezing Hamiltonian to gain some insight into the cause of this dependence. Making use of the results developed by some of us in \cite{FischerInPreparation}, we traced back the mathematical origin of this phenomena to the fact that the squeezing operator $\hat{H}_n$ is essentially self-adjoint on Fock states for $n=1,2$, a prerequisite for unique dynamics, but \textit{not} for $n\ge 3$. Nevertheless, as we also demonstrated, the dynamics of the even and odd truncation schemes correspond separately to well-defined unitary evolutions associated with two different self-adjoint \emph{extensions} of $\hat{H}_n$. 

As both schemes entail boundary conditions on the wave functions that might be regarded as unphysical, we have also analysed how the situation changes by introducing additional terms in the Hamiltonian. These or similar ones will undoubtedly be present in a realistic physical system. We show that, indeed, such terms can regulate the dynamics by creating natural cutoffs of the photon content and lead to effective Hamiltonians that are self-adjoint and allow thus valid extrapolations of the numerical simulation from finite-dimensional spaces to infinite dimension. Again, we provided a mathematical explanation of this phenomenon: additional field terms can restore the essential self-adjointness of the Hamiltonian, so that their finite-dimensional truncations produce a unique, truncation-dependent evolution in the infinite-size limit.

Our results shed light on the mathematical modelling and numerical simulation of nonlinear quantum optics phenomena that we expect to be relevant more broadly than the specific problem of generalized squeezing treated in this work.

\section*{Acknowledgments}

We would like to thank Mohammad Ayyash for useful discussions. We would like to thank Rub\'en Gordillo and Ricardo Puebla for sharing their manuscript \cite{Gordillo} with us prior to its publication. SA was supported by Japan's Ministry of Education, Culture, Sports, Science and Technology (MEXT) Quantum Leap Flagship Program Grant Number JPMXS0120319794. DL acknowledges financial support by Friedrich-Alexander-Universität Erlangen-Nürnberg through the funding program “Emerging Talent Initiative” (ETI), and was partially supported by the project TEC-2024/COM-84 QUITEMAD-CM. Daniel Braak acknowledges funding by the Deutsche Forschungsgemeinschaft (DFG) under grant no. 439943572.

\section*{Data availability}

The datasets generated and/or analysed during the current study are available from the corresponding author on reasonable request.

\section*{Appendix A: Additional examples of regulating the dynamics via Kerr interactions}

In Sec.~\ref{Sec:Simulations_Kerr}, we showed how a sufficiently strong Kerr interaction term in the Hamiltonian can regulate the dynamics and make the squeezing operator well behaved. In this appendix, we show a few additional examples that elucidate the role of the Kerr term in the dynamics.

\begin{figure}[h]
\includegraphics[width=8cm]{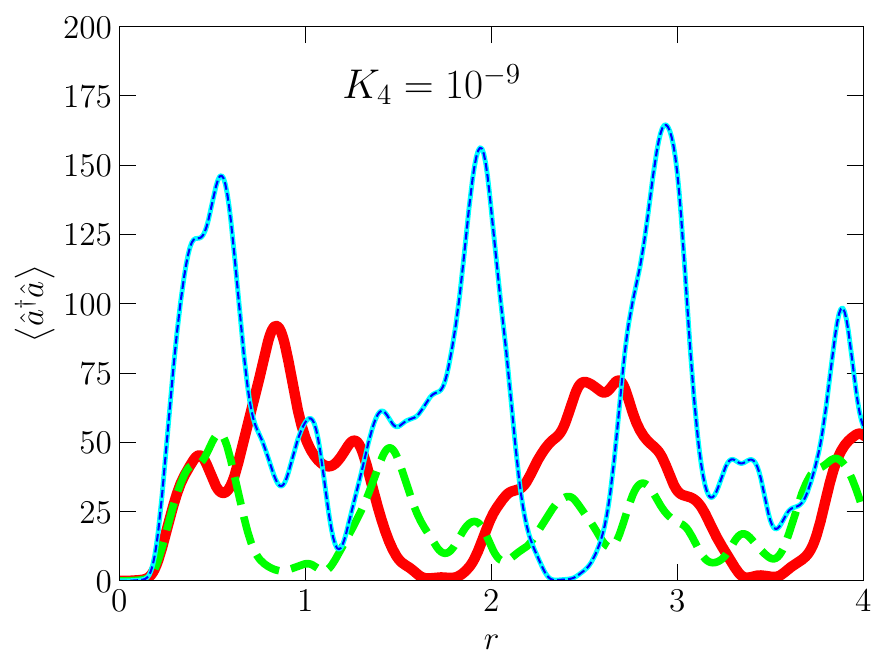}
\includegraphics[width=8cm]{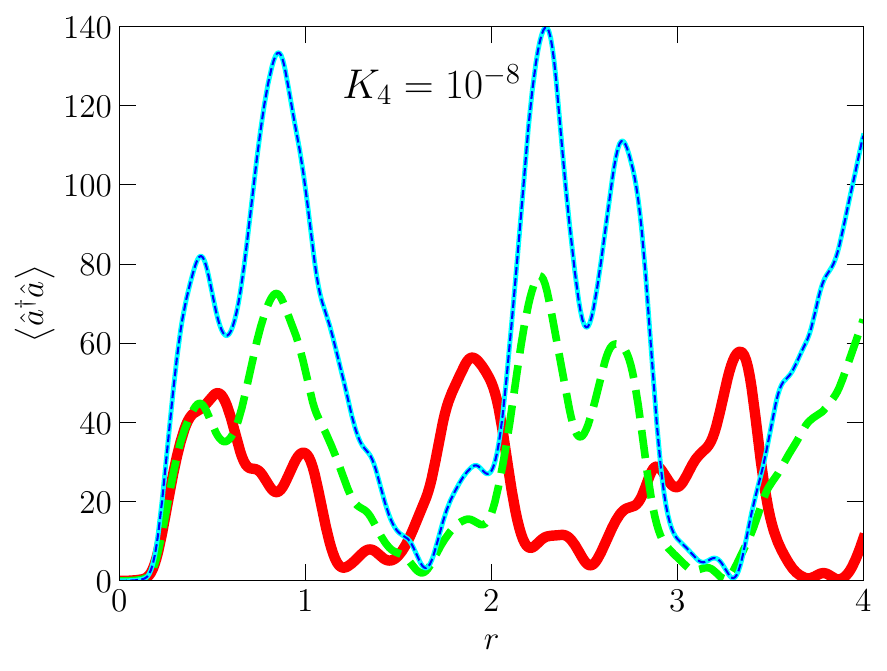}
\includegraphics[width=8cm]{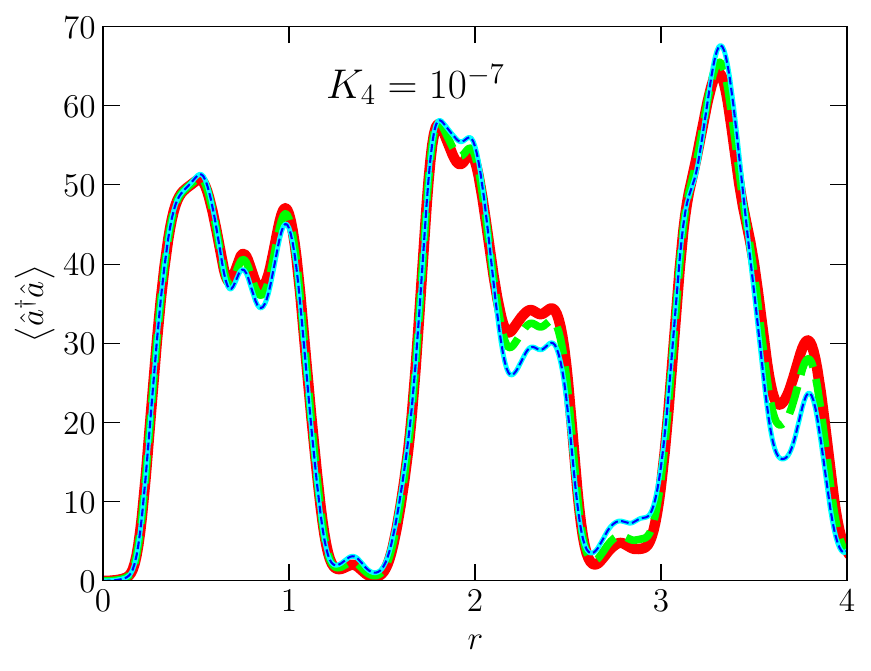}
\includegraphics[width=8cm]{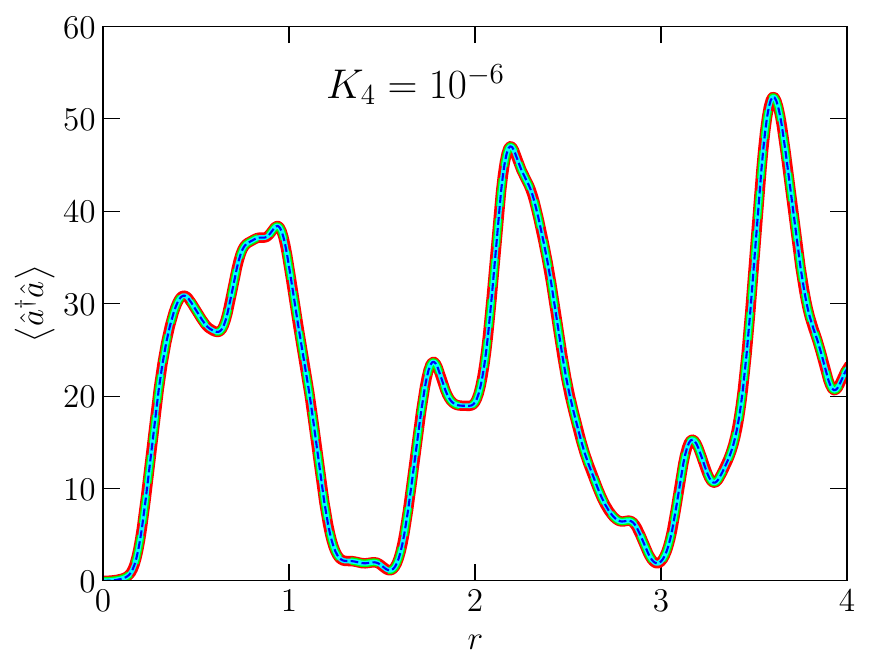}
\caption{Same as Fig.~\ref{Fig:AveragePhotonNumberVsSqueezingParameterKerr3Quadratic}, but with a quartic Kerr term with strength $K_4$.}
\label{Fig:AveragePhotonNumberVsSqueezingParameterKerr3Quartic}
\end{figure}

In addition to the example presented in Sec.~\ref{Sec:Simulations_Kerr}, the second example that we consider employs a quartic Kerr term:
\begin{equation}
\hat{H}_{3, \rm quartic \ Kerr} = i \left[ \left(\hat{a}^\dagger\right)^3 - \hat{a}^3 \right] + \frac{K_4}{4!} \left(\hat{a}^\dagger\right)^4 \hat{a}^4,
\label{Eq:Hamiltonian_3_QuarticKerr}
\end{equation}
where we have included the factor $4!$ for convenience. One motivation for considering the quartic case is the following: it allows us to make the Kerr term weaker at low photon numbers, while still ensuring that it rises rapidly at higher photon numbers to create the desired cutoff effect. Of course, the actual presence (or absence) of these terms is dictated by the effective Hamiltonian of the physical realization. The results are shown in Fig.~\ref{Fig:AveragePhotonNumberVsSqueezingParameterKerr3Quartic}. As in the quadratic case, the dynamics becomes independent of $N$ above a certain value of $K_4$. Performing a similar estimate as in the quadratic case, we find that we now must require $K_4(nN)^{5/2}/4!>1$. By taking $K_4=10^{-7}$ and $N=1000$, we have $K_4(nN)^{5/2}/4!=2$. As expected, the cutoff effect imposed by the Kerr term becomes noticeable when the Kerr term becomes comparable to the squeezing term at the upper photon number values in the truncated Hamiltonian. It is worth noting here that, at $10^{-7}$, the confining effects of the Kerr term start to take place at large photon numbers, close to $nN=3000$. As a result, the maximum height of the oscillations in Fig.~\ref{Fig:AveragePhotonNumberVsSqueezingParameterKerr3Quartic} does not directly correspond to the Kerr-term-induced energy barrier, e.g.~the point where the Kerr-term matrix elements become comparable to the squeezing matrix elements. This result is an indication that the maximum height is determined by subtler interference effects. Another way to see that is by noticing that the maximum height does not scale as $K_4^{-2/5}$ as we increase $K_4$ in Fig.~\ref{Fig:AveragePhotonNumberVsSqueezingParameterKerr3Quartic}, which is what we would expect if the oscillation amplitude were set directly by the matching of diagonal and off-diagonal matrix elements in the Hamiltonian. Furthermore, if we compare the cases $K_4=10^{-7}$ and $K_4=10^{-6}$ (and similarly, if we compare the cases $K=10^{-1}$ and $K=1$ in Fig.~\ref{Fig:AveragePhotonNumberVsSqueezingParameterKerr3Quadratic}), we see that the oscillation frequency is largely independent of $K$. Specifically, the oscillation frequency does not decrease rapidly with increasing $K$. This result emphasizes that the average photon number dynamics does not follow the simple picture where the photon number grows with some truncation-size-independent speed until it hits the cutoff point set by the Kerr coefficient, and is then reflected and decreases back to zero. Instead, the higher-order squeezing dynamics explores the entire available Hilbert space in an amount of time that is almost independent of the details of the Kerr term.

\begin{figure}[h]
\includegraphics[width=8cm]{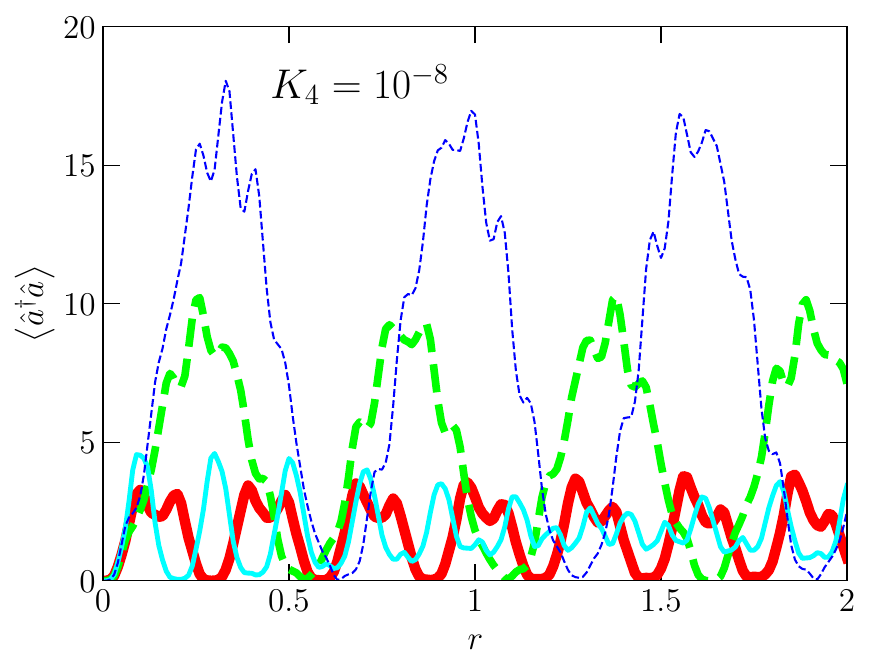}
\includegraphics[width=8cm]{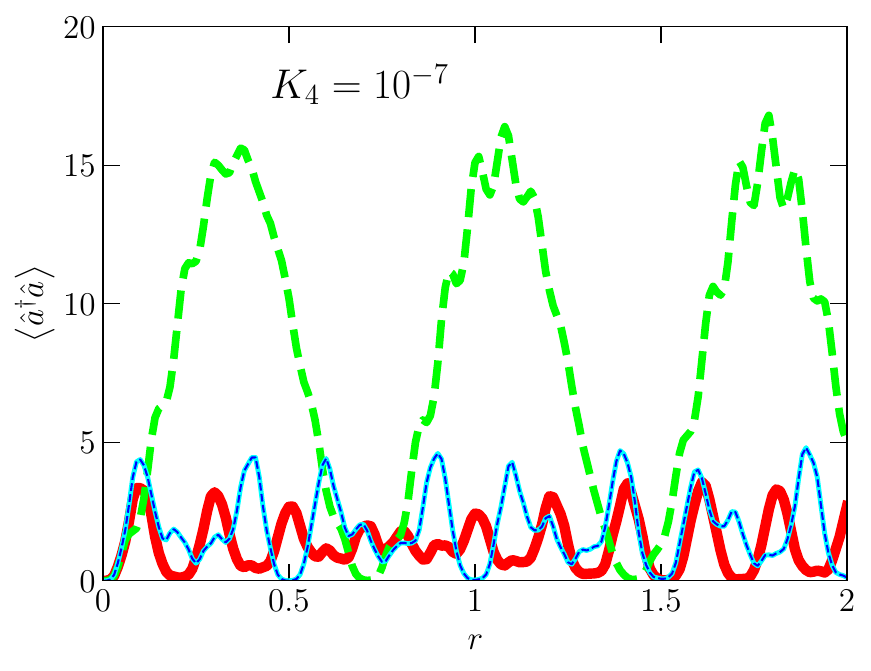}
\includegraphics[width=8cm]{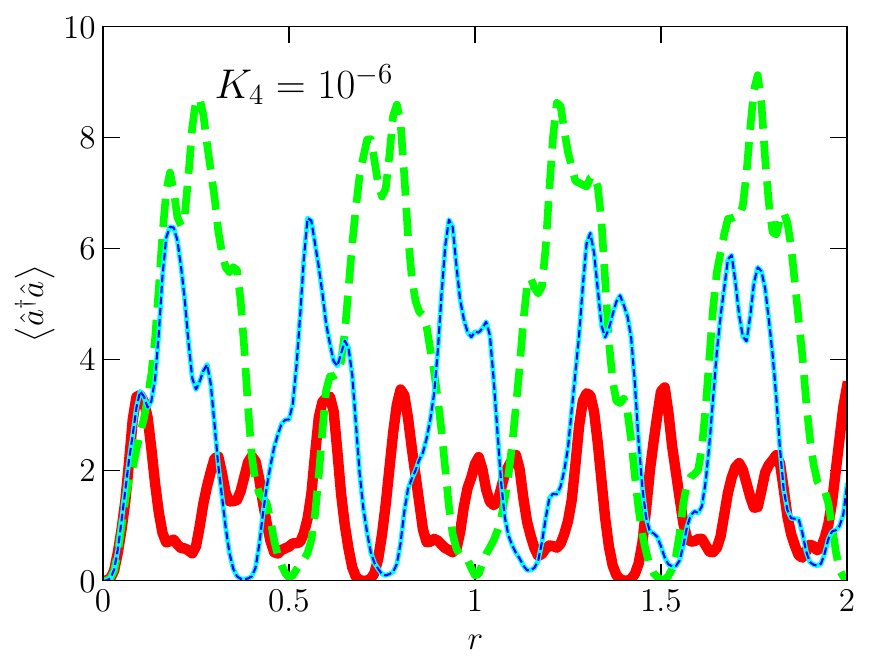}
\includegraphics[width=8cm]{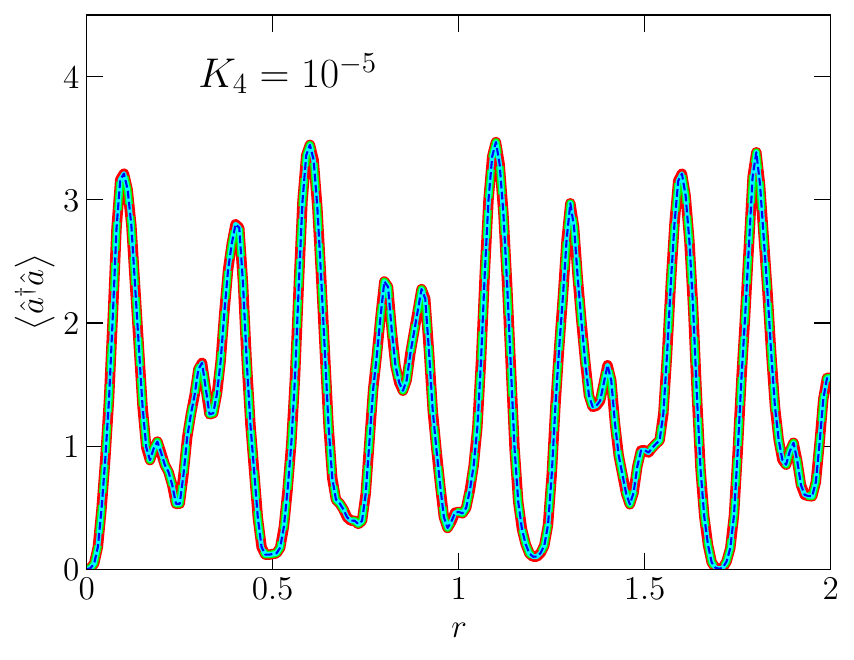}
\caption{Average photon number $\left\langle \hat{a}^{\dagger} \hat{a} \right\rangle$ for the state $\hat{U}^{(N)}_4(r)\ket{0}$ as a function of the squeezing parameter $r$ with a quartic Kerr term of varying strength ($K_4$). The red, green, cyan and blue lines correspond, respectively, to $N=1000$, 1001, $10^4$ and $10^4+1$.}
\label{Fig:AveragePhotonNumberVsSqueezingParameterKerr4Quartic}
\end{figure}

The third example that we consider is the case of quad-squeezing ($n=4$) with a quartic Kerr term:
\begin{equation}
\hat{H}_{4, \rm quartic \ Kerr} = i \left[ \left(\hat{a}^\dagger\right)^4 - \hat{a}^4 \right] + \frac{K_4}{4!} \left(\hat{a}^\dagger\right)^4 \hat{a}^4.
\label{Eq:Hamiltonian_4_QuarticKerr}
\end{equation}
Taking into account that the $n=3$ dynamics in Fig.~\ref{Fig:AveragePhotonNumberVsSqueezingParameter} are rather irregular, the case $n=4$ allows us to compare the dynamics in the presence of the Kerr term with the rather regular dynamics of the corresponding case in Fig.~\ref{Fig:AveragePhotonNumberVsSqueezingParameter}. The results with the Kerr term are plotted in Fig.~\ref{Fig:AveragePhotonNumberVsSqueezingParameterKerr4Quartic}. Following a similar argument to the one presented above for $n=3$, we find that the Kerr term becomes effective in regulating the dynamics when $K_4(nN)^2/4!>1$. If we examine the dynamics in the case $K_4=10^{-5}$, where the dynamics is independent of $N$, we find that the oscillations are quite irregular and do not resemble any of the curves in Fig.~\ref{Fig:AveragePhotonNumberVsSqueezingParameter}. 

In all three examples presented above, we found that, when the Kerr term is sufficiently strong, the results become independent of the truncation size. An alternative way to look at this result is as follows: for any finite value of the Kerr coefficient, there exists a minimum truncation size $N_{\rm min}$ such that the simulation results are independent of $N$ when $N>N_{\rm min}$. 
As we see in section \ref{Sec:Math_Kerr}, all these facts can be mathematically demonstrated. 

\begin{figure}[h]
\includegraphics[width=8cm]{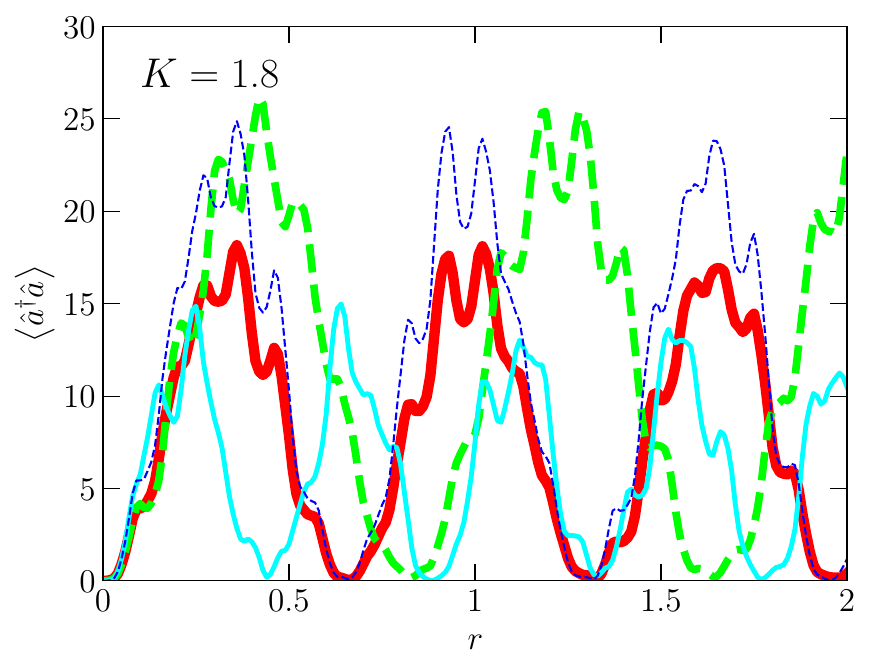}
\includegraphics[width=8cm]{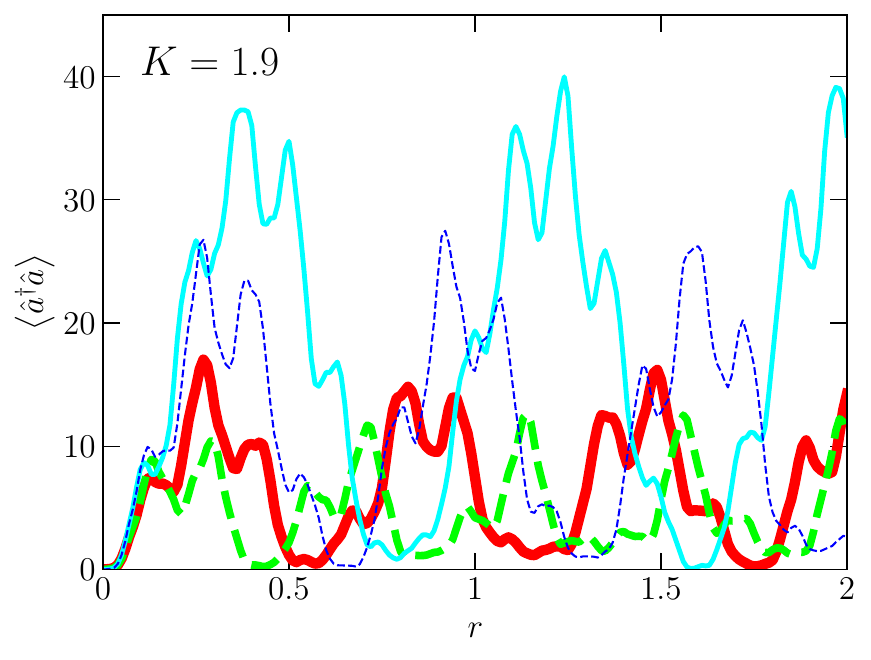}
\includegraphics[width=8cm]{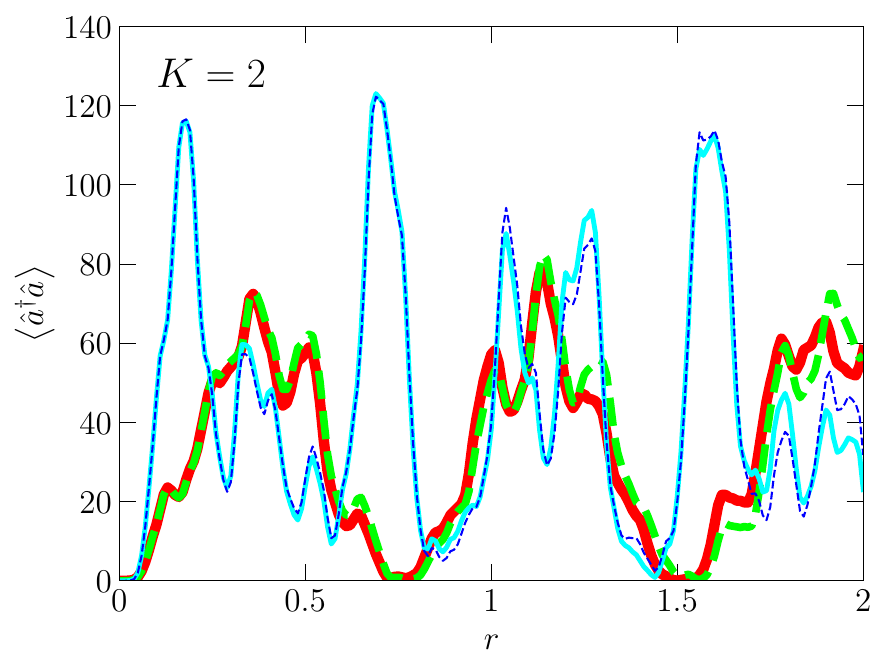}
\includegraphics[width=8cm]{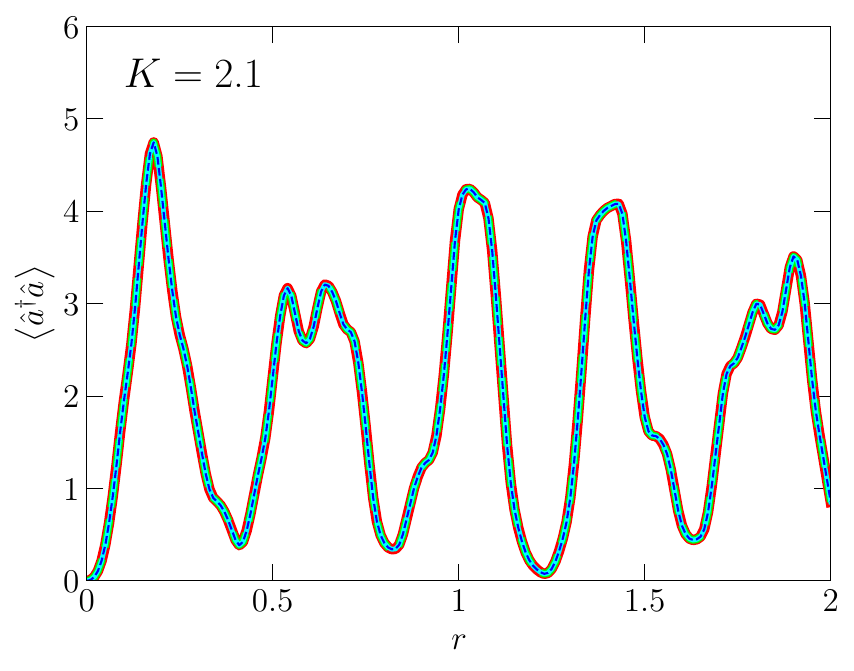}
\caption{Same as Fig.~\ref{Fig:AveragePhotonNumberVsSqueezingParameterKerr4Quartic}, but with a quadratic Kerr term. In this case, the results become independent of $N$ as soon as we cross the point $K=2$. When $K<2$, all the simulation results are generally different from each other. At $K=2$, there is a good, but not perfect, agreement between the even-$N$ simulations and the odd-$N$ simulations. Another noteworthy aspect of the case $K=2$ is that the photon number reaches values that are significantly higher than those reached for all other values of $K$.}
\label{Fig:AveragePhotonNumberVsSqueezingParameterKerr4Quadratic}
\end{figure}

The final example that we consider is the case of quad-squeezing ($n=4$) with a quadratic Kerr term:
\begin{equation}
\hat{H}_{4, \rm quadratic \ Kerr} = i \left[ \left(\hat{a}^\dagger\right)^4 - \hat{a}^4 \right] + K \left(\hat{a}^\dagger\right)^2 \hat{a}^2.
\label{Eq:Hamiltonian_4_QuadraticKerr}
\end{equation}
The special feature of this case is that both the squeezing and Kerr terms involve products of exactly four creation or annihilation operators in each term. As a result, unlike the three cases considered above, we cannot say which term will be dominant in the infinite-photon-number limit independently of the value of $K$. Furthermore, if one of the two terms is dominant over the other, this dominance will be largely independent of $N$. There will be no reversal of roles between the squeezing and Kerr terms as the photon number increases. The simulation results for this case are plotted in Fig.~\ref{Fig:AveragePhotonNumberVsSqueezingParameterKerr4Quadratic}. The change in behaviour is abrupt in this case. When $K<2$, all the different simulation sizes produce different results. In this case, the Kerr term is unable to effectively regulate the dynamics, and there is no separate convergence for even and odd $N$. When $K>2$, the Kerr term is sufficiently confining, and the simulation results are independent of $N$. At the critical point $K=2$, we observe a peculiar situation: the results for $N=1000$ and $1001$ are close to each other, and similarly for $N=10^4$ and $10^4+1$, i.e.~the parity dependence is weak. However, the dynamics changes significantly when we change $N$ from $10^3$ to $10^4$, somewhat similarly to the unregularized case $K=0$ in Fig.~\ref{Fig:AveragePhotonNumberVsSqueezingParameter}.

\begin{figure}[h]
\includegraphics[width=8cm]{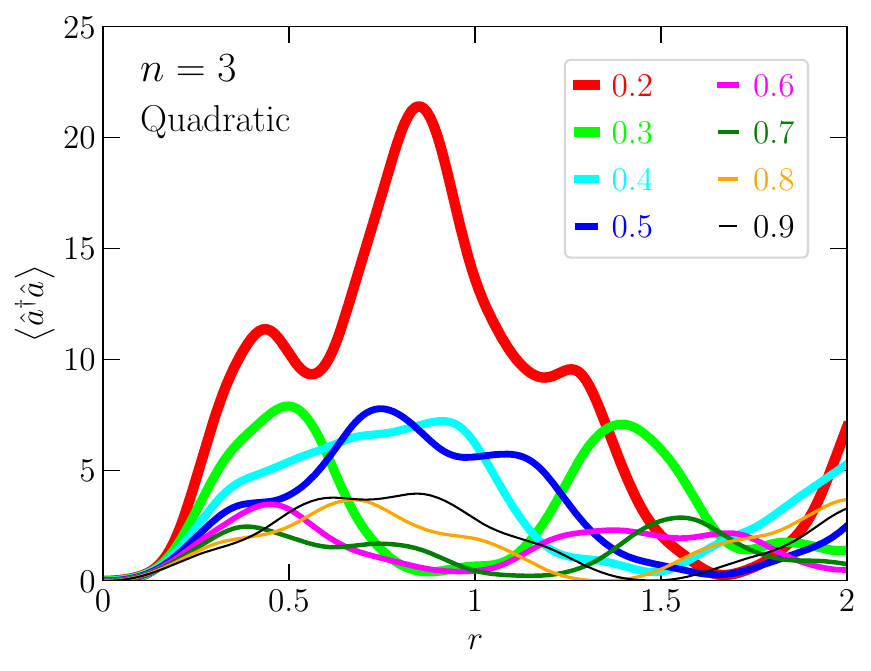}
\includegraphics[width=8cm]{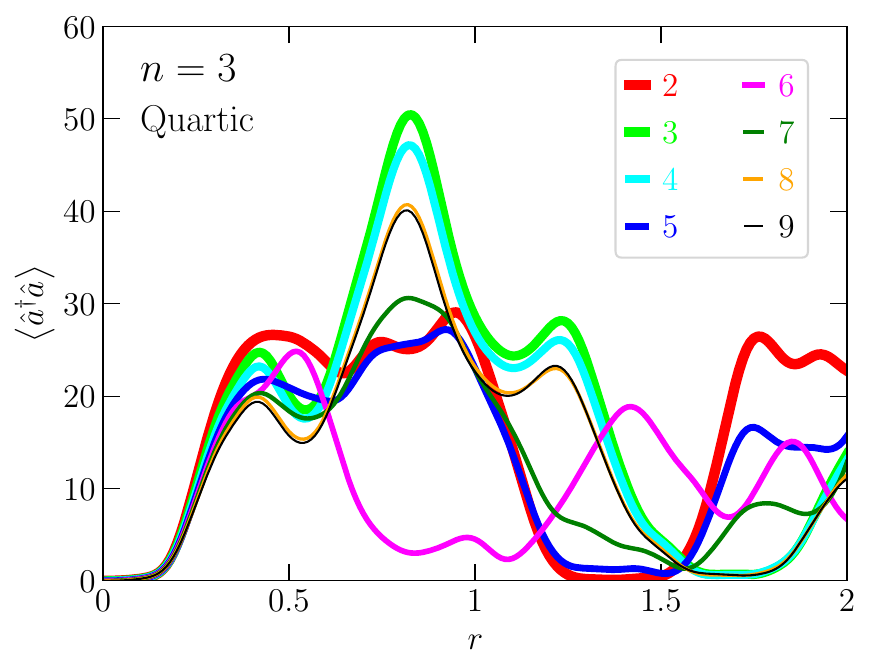}
\includegraphics[width=8cm]{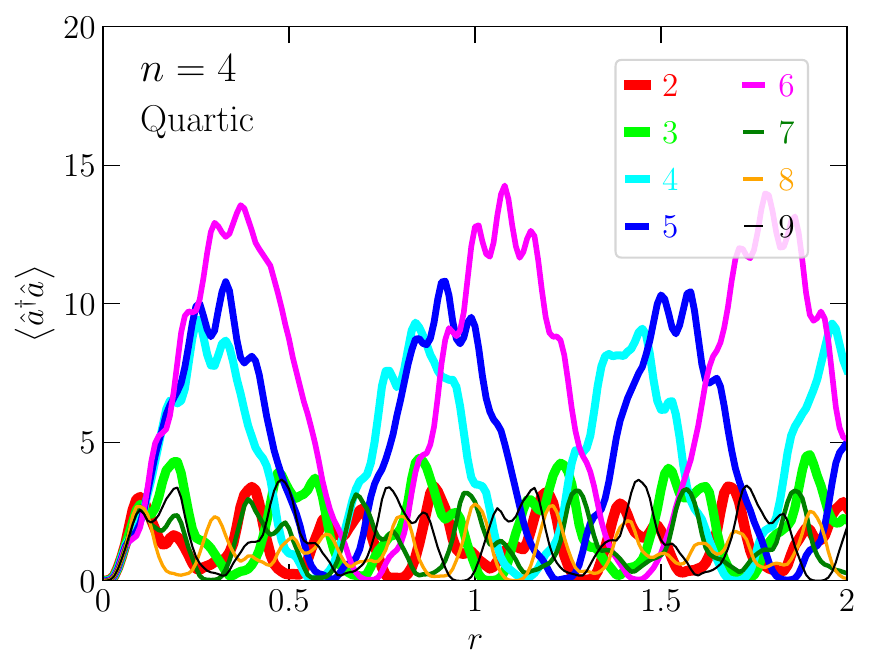}
\includegraphics[width=8cm]{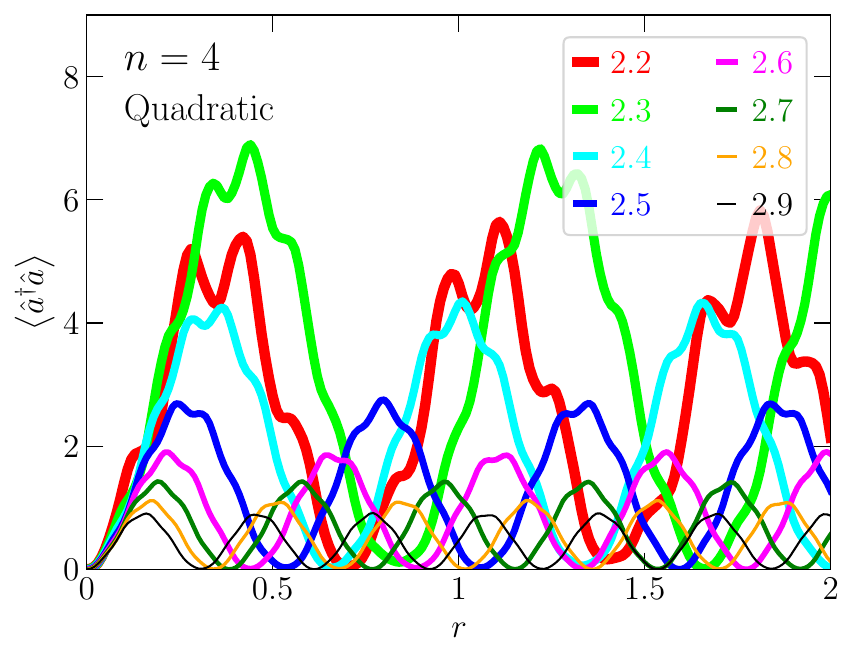}
\caption{Average photon number $\left\langle \hat{a}^{\dagger} \hat{a} \right\rangle$ for the state $\hat{U}_n(r)\ket{0}$ as a function of the squeezing parameter $r$ with a Kerr term of varying strength ($K$ or $K_4$). In all cases we use $N=1000$, keeping in mind that the Kerr term is strong enough that the results remain the same for any $N\geq 1000$ (see Figs.~\ref{Fig:AveragePhotonNumberVsSqueezingParameterKerr3Quadratic}-\ref{Fig:AveragePhotonNumberVsSqueezingParameterKerr4Quadratic}). For $n=3$ with a quadratic Kerr term, the different lines correspond to $K=0.2, 0.3, \cdots , 0.9$. For $n=3$ with a quartic Kerr term, the different lines correspond to $K_4=2 \times 10^{-6}, 3 \times 10^{-6}, \cdots , 9 \times 10^{-6}$. For $n=4$ with a quartic Kerr term, the different lines correspond to $K_4=2 \times 10^{-5}, 3 \times 10^{-5}, \cdots , 9 \times 10^{-5}$. For $n=4$ with a quadratic Kerr term, the different lines correspond to $K=2.2, 2.3, \cdots , 2.9$.}
\label{Fig:AveragePhotonNumberVsSqueezingParameterKerrVariableK}
\end{figure}

Taking into account that the dynamics represented in Figs.~\ref{Fig:AveragePhotonNumberVsSqueezingParameterKerr3Quadratic}, \ref{Fig:AveragePhotonNumberVsSqueezingParameterKerr3Quartic}, \ref{Fig:AveragePhotonNumberVsSqueezingParameterKerr4Quartic} and \ref{Fig:AveragePhotonNumberVsSqueezingParameterKerr4Quadratic} are generally different from those in Fig.~\ref{Fig:AveragePhotonNumberVsSqueezingParameter}, we perform a few additional calculations in which we vary the Kerr coefficient ($K$ or $K_4$) and plot the dynamics of multiple $K$ or $K_4$ values together. The results are shown in Fig.~\ref{Fig:AveragePhotonNumberVsSqueezingParameterKerrVariableK}. It should be noted that all the results plotted in Fig.~\ref{Fig:AveragePhotonNumberVsSqueezingParameterKerrVariableK} are independent of $N$, provided that $N\geq 1000$. In the majority of the cases, the dynamics are different from those obtained in Figs.~\ref{Fig:AveragePhotonNumberVsSqueezingParameterKerr3Quadratic}, \ref{Fig:AveragePhotonNumberVsSqueezingParameterKerr3Quartic}, \ref{Fig:AveragePhotonNumberVsSqueezingParameterKerr4Quartic} and \ref{Fig:AveragePhotonNumberVsSqueezingParameterKerr4Quadratic}. However, a few of the lines in Fig.~\ref{Fig:AveragePhotonNumberVsSqueezingParameterKerrVariableK} clearly resemble the even-$N$ lines in Fig.~\ref{Fig:AveragePhotonNumberVsSqueezingParameter}. A smaller number of lines in Fig.~\ref{Fig:AveragePhotonNumberVsSqueezingParameterKerrVariableK} generally resemble the odd-$N$ lines in Fig.~\ref{Fig:AveragePhotonNumberVsSqueezingParameter}. One special case is the one corresponding to $n=4$ with a quadratic Kerr term. As we move away from the critical point ($K=2$), the dynamics become increasingly regular, with both the oscillation amplitude and period decreasing with increasing $K$. Again, this phenomenon is mathematically demonstrated in Section \ref{Sec:Math}, where we show that $K=2$ precisely corresponds to the critical value of the Kerr parameter over which the Hamiltonian becomes essentially self-adjoint, and thus the limiting dynamics as $N\to\infty$ becomes unique.

Another interesting feature in Fig.~\ref{Fig:AveragePhotonNumberVsSqueezingParameterKerrVariableK} is the strong, irregular dependence of the oscillation pattern on the Kerr coefficient $K$ or $K_4$. This result suggests that it should be possible to extract the value of the Kerr coefficient by driving the oscillator at one of the multi-photon resonance frequencies and observing the ensuing dynamics. One obvious complication is that, in practice, there will not be a single Kerr term that accurately describes the nonlinearity up to $N=1000$.


\begin{thebibliography}{99}

\bibitem{Walls} D. F. Walls and G. J. Milburn, {\it Quantum Optics} (Springer, Berlin, 1994).

\bibitem{Scully} M. O. Scully and M. S. Zubairy, {\it Quantum Optics} (Cambridge University Press, 1997).

\bibitem{Drummond} P. D. Drummond and M. Hillery, {\it The Quantum Theory of Nonlinear Optics} (Cambridge University Press, 2014).

\bibitem{Felicetti} S. Felicetti, D. Z. Rossatto, E. Rico, E. Solano, and P. Forn-D\'iaz, Two-photon quantum Rabi model with superconducting circuits, Phys. Rev. A {\bf 97}, 013851 (2018).

\bibitem{Ayyash2024} M. Ayyash, X. Xu, S. Ashhab, and M. Mariantoni, Driven multiphoton qubit-resonator interactions, Phys. Rev. A {\bf 110}, 053711 (2024).

\bibitem{Chang} C. W. S. Chang, C. Sabín, P. Forn-Díaz, F. Quijandría, A. M. Vadiraj, I. Nsanzineza, G. Johansson, and C. M. Wilson, Observation of three-photon spontaneous parametric down-conversion in a superconducting parametric cavity, Phys. Rev. X {\bf 10}, 011011 (2020).

\bibitem{Eriksson} A. M. Eriksson, T. Sépulcre, M. Kervinen, T. Hillmann, M. Kudra, S. Dupouy, Y. Lu, M. Khanahmadi, J. Yang, C. Castillo-Moreno, P. Delsing, and S. Gasparinetti, Universal control of a bosonic mode via drive-activated native cubic interactions, Nature Commun. {\bf 15}, 2512 (2024).

\bibitem{Bazavan} O. Băzăvan, S. Saner, D. J. Webb, E. M. Ainley, P. Drmota, D. P. Nadlinger, G. Araneda, D. M. Lucas, C. J. Ballance, and R. Srinivas, Squeezing, trisqueezing, and quadsqueezing in a spin-oscillator system, arXiv:2403.05471.

\bibitem{Saner} S. Saner, O. Băzăvan, D. J. Webb, G. Araneda, D. M. Lucas, C. J. Ballance, and R. Srinivas, Generating arbitrary superpositions of nonclassical quantum harmonic oscillator states, arXiv:2409.03482.

\bibitem{Yurke} B. Yurke, P. G. Kaminsky, R. E. Miller, E. A. Whittaker, A. D. Smith, A. H. Silver, and R. W. Simon, Observation of 4.2-K equilibrium-noise squeezing via a Josephson-parametric amplifier, Phys. Rev. Lett. {\bf 60}, 764 (1988).

\bibitem{Movshovich} R. Movshovich, B. Yurke, P. G. Kaminsky, A. D. Smith, A. H. Silver, R. W. Simon, and M. V. Schneider, Observation of zero-point noise squeezing via a Josephson-parametric amplifier, Phys. Rev. Lett. {\bf 65}, 1419 (1990).

\bibitem{Yamamoto} T. Yamamoto, K. Inomata, M. Watanabe, K. Matsuba, T. Miyazaki, W. D. Oliver, Y. Nakamura, and J. S. Tsai, Flux-driven Josephson parametric amplifier, Appl. Phys. Lett. {\bf 93}, 042510 (2008).

\bibitem{Vahlbruch} H. Vahlbruch, M. Mehmet, K. Danzmann, and R. Schnabel, Detection of 15 dB squeezed states of light and their application for the absolute calibration of photoelectric quantum efficiency, Phys. Rev. Lett. {\bf 117}, 110801 (2016).

\bibitem{Kudra} M. Kudra {\it et al.}, Robust preparation of wigner-negative states with optimized snap-displacement sequences, PRX Quantum {\bf 3}, 030301 (2022).

\bibitem{Fisher} R. A. Fisher, M. M. Nieto and V. D. Sandberg, Impossibility of naively generalizing squeezed coherent states, Phys. Rev. D {\bf 29}, 1107 (1984).

\bibitem{Hillery1984} M. Hillery, M. S. Zubairy, and K. Wódkiewicz, Squeezing in higher order nonlinear optical processes, Phys. Lett. A {\bf 103}, 259 (1984).

\bibitem{Hong} C. K. Hong and L. Mandel, Higher-order squeezing of a quantum field, Phys. Rev. Lett. {\bf 54}, 323 (1985).

\bibitem{Braunstein1987} S. L. Braunstein and R. I. McLachlan, Generalized squeezing, Phys. Rev. A {\bf 35}, 1659 (1987).

\bibitem{Braunstein1990} S. L. Braunstein and C. M. Caves, Phase and homodyne statistics of generalized squeezed states, Phys. Rev. A {\bf 42}, 4115 (1990).

\bibitem{Hillery1990} M. Hillery, Photon number divergence in the quantum theory of n-photon down conversion, Phys. Rev. A {\bf 42}, 498 (1990).

\bibitem{DellAnno} F. Dell’Anno, S. De Siena, and F. Illuminati, Multiphoton quantum optics and quantum state engineering, Phys. Rep. {\bf 428}, 53 (2006).

\bibitem{Zelaya} K. Zelaya, S. Dey, and V. Hussin, Generalized squeezed states, Phys. Lett. A {\bf 382}, 3369 (2018).

\bibitem{Braak} D. Braak, The $k$-photon quantum Rabi model, to appear in T. Takagi \textit{et al.} (Eds.), \textit{Mathematical Foundations for Post-Quantum Cryptography} (Springer, Singapore, 2026), arXiv:2401.02370.

\bibitem{Ashhab2025} S. Ashhab and M. Ayyash, Properties and dynamics of generalized squeezed states, New J. Phys. {\bf 27}, 054104 (2025).

\bibitem{Gordillo} R. Gordillo and R. Puebla, Comment on “Properties and dynamics of generalized squeezed states”, arXiv:2507.12250.

\bibitem{reed1} M. Reed and B. Simon, \textit{Methods of Modern Mathematical Physics I: Functional Analysis} (Academic Press, New York, 1972).

\bibitem{Ying} Z.-J. Ying, H.-H. Han, B.-J. Li, S. Felicetti, and D. Braak, Critical quantum metrology in a stabilized two-photon Rabi model, Adv. Quantum Technol. {\bf 8}, e00263 (2025).

\bibitem{Ayyash2025} M. Ayyash and S. Ashhab, Dispersive regime of multiphoton qubit-oscillator interactions, Phys. Rev. A {\bf 112}, 023713 (2025).

\bibitem{fischer-wrong-box} F. Fischer, D. Burgarth, and D. Lonigro, Quantum particle in the wrong box (or: the perils of finite-dimensional approximations), arXiv:2412.15889.

\bibitem{arzani2025effective} F. Arzani, R. I. Booth, U. Chabaud, Effective descriptions of bosonic systems can be considered complete, Nat. Commun. {\bf 16}, 9744 (2025).

\bibitem{etienney2025aposteriori} P.-L. Etienney, R. Robin, P. Rouchon, A posteriori error estimates for the Lindblad master equation, arXiv:2501.09607.

\bibitem{FischerInPreparation} F. Fischer, D. Burgarth, and D. Lonigro, Self-adjoint realizations of higher-order squeezing operators, arXiv:2508.09044.

\bibitem{Footnote} These two extensions correspond to the choices $U=\pm I$ in the general parametrization provided by \cite[Theorem 4.16]{FischerInPreparation}. The coefficients $d_j^{(i)}$ are solutions of a recurrence relation provided in the paper.

\bibitem{nagel-oneparametersemigroups-1999} K.-J. Engel and R. Nagel, \textit{One-Parameter Semigroups for Linear Evolution Equations} (Springer, New York, 2000).

\end{thebibliography}
\end{document}